\newtheorem{theorem}{Theorem}
\newtheorem{lemma}{Lemma}
\newtheorem{proposition}{Proposition}
\theoremstyle{definition}
\newtheorem{definition}{Definition}
\renewcommand{\hat}{\widehat}
\renewcommand{\tilde}{\widetilde}
\title{Inference for Projection-Based Wasserstein Distances on Finite spaces
}
\author{Ryo Okano$^1$ \and Masaaki Imaizumi$^{1,2}$}
\begin{document}
\maketitle

\begin{center}
    $^1$The  University of Tokyo / $^{2}$RIKEN Center for Advanced Intelligence Project
\end{center}

\begin{abstract}
   The Wasserstein distance is a distance between two probability distributions and has recently gained increasing popularity in statistics and machine learning, owing to its attractive properties. One important approach to extending this distance is using low-dimensional projections of distributions to avoid a high computational cost and the curse of dimensionality in empirical estimation, such as the sliced Wasserstein or max-sliced Wasserstein distances.
   Despite their practical success in machine learning tasks, the availability of statistical inferences for projection-based Wasserstein distances is limited owing to the lack of distributional limit results.
   In this paper, we consider distances defined by integrating or maximizing Wasserstein distances between low-dimensional projections of two probability distributions. Then we derive limit distributions regarding these distances when the two distributions are supported on finite points. We also propose a bootstrap procedure to estimate quantiles of limit distributions from data. This facilitates asymptotically exact interval estimation and hypothesis testing for these distances. 
   Our theoretical results are based on the arguments of Sommerfeld and Munk (2018) for deriving distributional limits regarding the original Wasserstein distance on finite spaces and the theory of sensitivity analysis in nonlinear programming.
   Finally, we conduct numerical experiments to illustrate the theoretical results and demonstrate the applicability of our inferential methods to real data analysis.
\end{abstract}

\section{Introduction}

The Wasserstein distance is a distance between two probability distributions, having attracted considerable interest in the statistics and machine learning literature \cite{villani2009optimal, panaretos2019statistical, peyre2019computational}. This distance is based on the optimal transport problem and measures the amount of work required to transform one distribution into another. 
Specifically, given two probability distributions $P$ and $Q$ with finite $p \ge 1$ moments and support in $\mathcal{X} \subset \mathbb{R}^d$, $d \ge 1$, the $p$-Wasserstein distance between $P$ and $Q$ is defined as
\begin{equation}
    {W}_p(P, Q) = \left(\inf_{\pi \in \Pi(P, Q)} \int_{\mathcal{X} \times \mathcal{X}} \|x - y\|^p d \pi(x, y) \right)^{1/p},
\end{equation}
where $\Pi(P, Q)$ is the set of joint probability distributions whose respective marginals coincide with $P$ and $Q$, known as couplings.
Compared to other measures of distribution closeness, such as the Kullback--Leibler divergence or the total variation distance, the Wasserstein distance has two main advantages: (i) it is sensitive to the underlying geometry of distribution support and (ii) it does not assume the absolute continuity of distributions with respect to the other. Owing to these advantages, it has recently been used as an attractive data analytical tool, particularly in computer vision \cite{rubner2000earth, solomon2015convolutional, sandler2011nonnegative}
and natural language processing \cite{kusner2015word, zhang2016building}.

Recently, various extensions of the original Wasserstein distance have been proposed to address its shortcomings, mainly the high computational cost and the curse of dimensionality in empirical estimation \citep{peyre2019computational, weed2019sharp}. One important approach is using low-dimensional projections of distributions, that is, computing the Wasserstein distances between low-dimensional projections of distributions $P$ and $Q$ instead of dealing with the original ones. 
The most representative example of this approach is the sliced Wasserstein distance \cite{rabin2011wasserstein, bonneel2015sliced}, which averages the Wasserstein distances between the random one-dimensional projections of distributions $P$ and $Q$.
As the Wasserstein distance between univariate distributions is known to be easily computed, the sliced Wasserstein distance is an easily computable variant.
Another example is the max-sliced Wasserstein distance \cite{deshpande2019max}, which maximizes the Wasserstein distances between random one-dimensional projections and has a computational advantage. By considering $k$-dimensional projections $(1 \le k \le d)$,
the max-sliced Wasserstein distance is generalized to the 
 projection robust Wasserstein (PRW) distance \cite{paty2019subspace,niles2019estimation}.
The PRW distance effectively captures the difference between two distributions if they only differ in a low-dimensional subspace and solves the curse of dimensionality in estimation \citep{niles2019estimation, lin2021projection}.
Several recent studies have shown that these proposals are practical for many machine learning tasks \cite{lin2020projection, kolouri2016sliced, kolouri2018sliced, carriere2017sliced, liutkus2019sliced}.

The development of inferential tools (e.g., interval estimation or hypothesis testing) for the Wasserstein distance and its extensions has become an active research area in statistics. Many studies have derived limit distributions for these distances as a basis for inferential procedures.
For example, the limit distributions of the empirical Wasserstein distance are studied when distributions $P$ and $Q$ are supported in $\mathbb{R}$ \cite{munk1998nonparametric, freitag2005hadamard, del1999tests, ramdas2017wasserstein} and when they are supported at finite or countable points \cite{sommerfeld2018inference, tameling2019empirical}. 
The limit distributions of the empirical regularized optimal transport distance in finite spaces, which is an easily computable extension of the Wasserstein distance, have been derived by
\cite{bigot2019central, klatt2020empirical}. 
However, for projection-based extensions of the Wasserstein distance, such distributional limit results are not well established, which hinders their inference.
For more details on related works, see Section \ref{sec:related}.

In this study, we propose inferential procedures for projection-based Wasserstein distances when distributions $P$ and $Q$ are supported on finite points.
We consider two types of distances: (i) the integral projection robust Wasserstein (IPRW) distance, which is defined by integrating Wasserstein distances between $k$-dimensional projections of distributions $P$ and $Q$ $(1 \le k \le d)$ and includes the sliced Wasserstein distance as a special case, and (ii) the PRW distance we introduced above.
As a first contribution, we derive limit distributions of the empirical IPRW and PRW distances with entropic regularization.
Second, we show the consistency of the rescaled bootstrap (or the $m$-out-$n$ bootstrap), which enables us to estimate quantiles of the limit distributions from data. 
Consequently, we construct asymptotically exact confidence intervals for these two distances, and obtain new statistics for testing the equality of the distributions. Finally, we conduct numerical experiments to verify our theoretical results and apply our inferential methods to real data analysis.

We derive our distributional limits by showing the directional Hadamard differentiability of the IPRW and PRW distances and applying a refined delta method. This strategy was developed by \cite{sommerfeld2018inference} for the inference of the original Wasserstein distance in finite spaces.  
To implement this strategy for the PRW distance, we use the following two key techniques. 
First, we utilize sensitivity analysis in nonlinear programming, which investigates how the optimal value of an optimization problem changes when the objective function and the constraints are changed \cite{fiacco1983introduction}.
We regard the PRW distance between distributions $P$ and $Q$ as the optimal value of a parametric optimization problem with parameters $P$ and $Q$, and apply the result of the sensitivity analysis to show its directional differentiability.
Second, we introduce an entropic regularization term in the PRW distance to avoid including a non-smooth objective function in its definition, which helps show its directional differentiability.
The idea of adding an entropic regularization term to the PRW distance was proposed by \cite{lin2020projection}, and we call this quantity the regularized PRW distance.

We can summarize the contributions of this paper as follows:
\begin{itemize}
    \item We derive limit distributions of the empirical versions of the IPRW and regularized PRW distances when distributions $P$ and $Q$ are supported on finite points. 
    \item We show the consistency of the rescaled bootstrap for the IPRW and PRW distances, 
    which enables us to estimate quantiles of the limit distributions from  data.
    This facilitates asymptotically exact interval estimations and hypothesis testing for these distances. 
    \item We conduct numerical experiments to illustrate our theoretical  results, and show the applicability of our inferential methods to real data analysis.
\end{itemize}

\subsection{Related work} \label{sec:related}
 There are several extensions of the Wasserstein distance based on low-dimensional projections, in addition to the distances we consider, such as the generalized sliced \cite{kolouri2019generalized}, tree-sliced \cite{le2019tree}, and distributional sliced \cite{nguyen2020distributional} Wasserstein distances. 
Beyond the projection-based approach, 
\cite{cuturi2013sinkhorn} proposed the entropic regularization of optimal transport, which can be efficiently computed through an iterative method, called the Sinkhorn algorithm. Further, \cite{goldfeld2020gaussian} proposed the smooth Wasserstein distance, which avoids the curse of dimensionality in estimation by smoothing out local irregularities in distributions $P$ and $Q$ via convolution with a Gaussian kernel.

Statistical inference for the Wasserstein distance and its extensions has been studied in several settings, based on their limit distributions. When distributions $P$ and $Q$ are supported in $\mathbb{R}$, the Wasserstein distance between them has a closed form and is described as the $L^p$ norm of the quantile functions of $P$ and $Q$. Using this fact, \cite{munk1998nonparametric, freitag2005hadamard, del2005asymptotics, ramdas2017wasserstein} derived the limit distributions of the empirical Wasserstein distances in the univariate case and studied the validity of the bootstrap. The inference for the Wasserstein distance over finite spaces was studied by \cite{sommerfeld2018inference}, and the result was extended to a case with countable spaces by \cite{tameling2019empirical}.
\cite{bigot2019central, klatt2020empirical} considered inference for the entropic regularized optimal transport distance on finite spaces.
In a general setting, \cite{del2019central} established central limit theorems for the empirical Wasserstein distance and \cite{mena2019statistical} established similar results for the entropic regularized optimal transport distance; however, these results contain unknown centering constants that hinder their use for statistical inference.  

To the best of our knowledge, statistical inference for projection-based Wasserstein distances has only been considered in one study. Specifically,
\cite{manole2019minimax} proposed confidence intervals with finite-sample validity for the sliced Wasserstein distance and showed their minimax optimality in length.
Owing to the closed-form expression of the one-dimensional Wasserstein distance, their inference method is valid, without imposing strong assumptions on distributions $P$ and $Q$ such as discreteness.
However, their approach is not applicable when the projection dimension is greater than one. 
 By contrast, our work covers Wasserstein distances based on projections to dimensions greater than one. 

\subsection{Notation}
$\|\cdot \|$ and $\langle \cdot \rangle$ denote the Euclidean norm and inner product, respectively. $\mathbb{R}_{> 0}$ is the positive real and $\mathbb{R}_{\ge 0}$ the non-negative real. $\otimes$ is the Kronecker product.
For any $a, b \in \mathbb{R}$, $a \land b$ denotes the minima of $a$ and $b$. 
For $1 \le k \le d$, the set of $d \times k$ matrices with orthonormal columns is denoted as $S_{d, k} = \{E \in \mathbb{R}^{d \times k}: E^\top E = I_k\}$. Note that, when $k=1$, $S_{d, k}$ coincides with the $d$-dimensional unit ball, $\mathbb{S}^{d-1} = \{ x \in \mathbb{R}^d: \|x\| = 1 \}$. 
Given a map $T: \mathbb{R}^d \to \mathbb{R}$ and Borel probability measure $P$ supported in $\mathbb{R}^d$, $T_{\#}P$ denotes the pushforward of $P$ under $T$, defined by $T_{\#}P(B) = P(T^{-1}(B))$ for all Borel sets $B \subset \mathbb{R}^d$.
For any set $A \subset \mathbb{R}^d$, its diameter is denoted by $\text{diam}(A) = \sup \{\|x-y\|: x, y \in A \}$.
$\mathcal{P}(\mathbb{R}^n)$ denotes the set of all subsets of $\mathbb{R}^n$.
$\stackrel{d}{\to}$ denotes convergence in  distribution of random variables and $\stackrel{d}{=}$ denotes distributional equality of the random variables.

\section{Background}
Here, we provide background details on the Wasserstein distance and its projection-based extensions in finite spaces.

\subsection{Wasserstein distance and entropic regularization}
\subsubsection{Wasserstein distance}
In this study, we restrict support $\mathcal{X} = \{x_1, ..., x_N\} \subset \mathbb{R}^d$ to a finite set of size $N \in \mathbb{N}$.
Every probability measure on $\mathcal{X}$ is represented as an element in an $(N-1)$-dimensional sphere $\Delta_N = \{ r \in \mathbb{R}^N_{> 0} : \sum_{i=1}^N r_i = 1 \}$; hence, we do not distinguish vector $r\in \Delta_N$ and its corresponding probability distribution.
Given support $\mathcal{X} = \{x_1, ..., x_N\}$ and order $p \ge 1$, we define cost vector $c_p(\mathcal{X}) \in \mathbb{R}^{N^2}$ as
${c_p(\mathcal{X})}_{(i-1)N+j} = \|x_i - x_j \|^p$ for $1 \le i, j \le N$, representing the transport cost from $x_i$ to $x_j$.
The $p$-Wasserstein distance between the two distributions $r, s \in \Delta_N$ on $\mathcal{X} \subset \mathbb{R}^d$ is given by
\begin{equation}
    W_p(r, s; \mathcal{X}) = \left\{ \min_{\pi \in \Pi(r, s)} \langle c_p(\mathcal{X}), \pi \rangle \right\}^{1/p},
    \label{def_wasserstein}
\end{equation}
where $\Pi(r, s)$ is a set of vectors of length $N^2$ that represent the couplings of $r$ and $s$.
Formally, $\Pi(r, s)$ is defined as
\begin{equation}
\label{transport}
\Pi(r, s) = \left\{\pi \in \mathbb{R}^{N^2}: A\pi = 
\left(
\begin{array}{c}
r \\
s 
\end{array}
\right)
 \right\},
\end{equation}
where $A$ is a coefficient matrix:
\begin{equation*}
    A = \left(
\begin{array}{c}
I_{N \times N} \otimes 1_{1 \times N} \\
1_{1 \times N} \otimes I_{N \times N} 
\end{array}
\right)
\in \mathbb{R}^{2N \times N^2}.
\end{equation*}
Constraint $A\pi = (r, s)^\top$ ensures that $\pi$ satisfies the marginal constraints: a matrix $\tilde{\pi} \in \mathbb{R}^{N \times N}$, generated by $\pi$ as $ \tilde{\pi}_{i,j} = \pi_{(i-1)N + j}$, satisfies 
$\sum_{j=1}^N\tilde{\pi}_{i, j} = r_i$ for $1 \le i \le N$, and $\sum_{i=1}^N\tilde{\pi}_{i, j} = s_j$  for $1 \le j \le N$.

\subsubsection{Entropic regularization}
The entropic regularization is a typical extension of the Wasserstein distance \cite{cuturi2013sinkhorn}.
Given $p \ge 1$, distributions $r, s \in \Delta_N$, and a regularization parameter $\lambda > 0$, we consider an entropic regularized optimal transport problem as follows:
\begin{equation}
\label{EOT}
    \min_{\pi \in \Pi(r, s)} \langle c_p(\mathcal{X}), \pi \rangle + \lambda \varphi(\pi),
\end{equation}
where $\varphi: \mathbb{R}^{N^2} \to \mathbb{R}$ is the negative Boltzmann-Shannon entropy, defined as
\begin{equation}
    \varphi(\pi) = 
\begin{cases}
\sum_{i=1}^{N^2}\pi_i \log(\pi_i) - \pi_i + 1 & \text{if} \,\ \pi \in \mathbb{R}^{N^2}_{\ge0},\\
 + \infty & \text{otherwise}. 
\end{cases}
\end{equation}
Here, we set $0 \log(0) = 0$. 
Because problem \eqref{EOT} is a strictly convex optimization problem, it has a unique optimal solution. 
We refer to the solution of \eqref{EOT} as the regularized optimal transport plan $\pi_{p, \lambda}(r, s ;\mathcal{X})$. 
Using this notion, we can define the $p$-regularized optimal transport distance (or the $p$-Sinkhorn divergence)
between $r, s \in \Delta_N$ as
\begin{equation}
    W_{p, \lambda}(r, s;\mathcal{X}) = \langle c_p(\mathcal{X}), \pi_{p, \lambda}(r, s; \mathcal{X}) \rangle^{1/p}.
\end{equation}
Several computational advantages and statistical properties of the regularized optimal transport distance have been studied (e.g., see \cite{cuturi2013sinkhorn,peyre2019computational, klatt2020empirical, bigot2019central}).

\subsection{Projection-based Wasserstein distances}
We introduce extensions of the Wasserstein distance based on low-dimensional projections of the distributions. 
Fix $k \leq d$ and let $\pi_E: x \in \mathbb{R}^d \mapsto E^\top x$ for $E \in S_{d, k}$. 
For distribution $P$ on $\mathbb{R}^d$, the $k$-dimensional projection of $P$ in $E \in S_{d, k}$ is defined by $P_E = {\pi_E}_{\#} P$. That is, $P_E$ is the distribution of $E^\top X$ for $X \sim P$. 

\subsubsection{Integral projection robust Wasserstein distance}
We study $k$-dimensional projections of distributions $r, s \in \Delta_N$ on a finite $\mathcal{X} = \{x_1, ..., x_N \} \subset \mathbb{R}^d$. 
The Wasserstein distance between the projections of the distributions $r$ and $s$ in direction $E \in S_{d, k}$ is
represented by $W_p(r, s; \mathcal{X}_E)$, where $\mathcal{X}_E = \{E^\top x_1, ..., E^\top x_N\} \subset \mathbb{R}^k$. 
The $p$-integral projection robust Wasserstein (IPRW) distance \cite{lin2021projection} is defined as an integral of the Wasserstein distances over the direction $E$, that is,
\begin{equation}
    \mathrm{IW}_p(r, s) = \left(\int_{S_{d,k}} W_p^p(r, s; \mathcal{X}_E) d \mu(E) \right)^{1/p},
\end{equation}
where $\mu$ is a given measure on $S_{d, k}$.
\cite{lin2021projection} shows that the IPRW distance with the uniform measure on $S_{d, k}$ solves the curse of dimensionality in estimation.
When the projection dimension is $k = 1$ and $\mu$ is a uniform measure of $S_{d, 1}$, which coincides with the $d$-dimensional unit ball $\mathbb{S}^{d-1}$, the IPRW distance corresponds to the sliced Wasserstein distance \cite{rabin2011wasserstein, bonneel2015sliced}.
The sliced Wasserstein distance has the advantage of being easy to calculate, due to the fact that the Wasserstein distance between one-dimensional distributions is easy to compute.

\subsubsection{Projection robust Wasserstein distance}
The $p$-projection robust Wasserstein (PRW) distance \cite{paty2019subspace} is defined as the maximum Wasserstein distance between $k$-dimensional projections of $r, s \in \Delta_N$ over direction $E \in S_{d,k}$, namely
\begin{equation}
\mathrm{PW}_p(r, s) = \max_{E \in S_{d, k}} W_p(r, s; \mathcal{X}_E). 
\end{equation}
When $k = 1$, the PRW distance corresponds to the max-sliced Wasserstein distance \cite{deshpande2019max}.
The PRW distance
effectively captures the difference between the two distributions $r, s$ if they differ only in a low-dimensional subspace, and
 \cite{niles2019estimation, lin2021projection} showed that it solves the curse of dimensionality in estimation.

We further introduce the entropic regularization for the PRW distance.
With a fixed regularization parameter $\lambda > 0$ and projection direction $E \in S_{d, k}$, we represent the regularized optimal transport distance between the projections of $r$ and $s$ as $W_{p, \lambda}(r, s; \mathcal{X}_E)$.
Then, the $p$-regularized PRW distance is defined by
\begin{equation}
\mathrm{PW}_{p, \lambda}(r, s) = \max_{E \in S_{d, k}} W_{p, \lambda}(r, s; \mathcal{X}_E).
\end{equation}
This method with entropy regularization has the advantage of reducing the computational cost, owing to smoothing out the non-smoothness due to maximization \cite{lin2020projection}.

\section{Distributional limits}
We study distributional limits of the empirical version of the IPRW and regularized PRW distances on a finite space. 
Specifically, we consider the following setting.
For probability distributions $r, s \in \Delta_N$ on $\mathcal{X} = \{x_1, ..., x_N\} \subset \mathbb{R}^d$ and sample sizes $n$ and $m$, let $X_1, ..., X_n \sim r, Y_1, ..., Y_m \sim s$ be independent and identically distributed (i.i.d.) samples. Then, we define their corresponding empirical distributions $\hat{r}_n, \hat{s}_m \in \Delta_N$, whose $i$th elements are given as
\[
\hat{r}_{n, i} = \frac{\#\{k: X_k = x_i\}}{n}, \quad
\hat{s}_{m, i} = \frac{\#\{k: Y_k = x_i\}}{m},
\]
for $1 \le i \le N$.
Given order $p \ge 1$ and regularization parameter $\lambda > 0$, we derive the distributions to which
\[
\sqrt{\frac{nm}{n+m}} \{ \mathrm{IW}_p(\hat{r}_n, \hat{s}_m) - \mathrm{IW}_p(r, s) \}
\]
and
\[
\sqrt{\frac{nm}{n+m}} \{ \mathrm{PW}_{p, \lambda}(\hat{r}_n, \hat{s}_m) - \mathrm{PW}_{p, \lambda}(r, s) \}
\]
converge in law as $n, m \to \infty$.
\subsection{Outline and preparation}
We derive distributional limits using the delta method, which is based on the differentiability of the IPRW and regularized PRW distances. 
Specifically, following that
$
\sqrt{\frac{nm}{n + m}}
\{
(\hat{r}_n, \hat{s}_m)
-
(r, s)
\}
$
converges to a Gaussian random vector by the central limit theorem, we can derive  distributional limits by applying the delta method
with the maps
  $(r, s) \mapsto \mathrm{IW}_p(r, s)$ and $(r, s) \mapsto\mathrm{PW}_{p, \lambda}(r, s)$.
To use the delta method in this setting, we consider the directional Hadamard differentiability, which is defined as follows.
\begin{definition}[Directional Hadamard differentiability \cite{romisch2004delta, sommerfeld2018inference}]
A function $f:D_f \subset \mathbb{R}^d \to \mathbb{R}$ is directionally Hadamard differentiable at $u \in D_f$ tangentially to $D_0 \subset \mathbb{R}^d$, if there exists a map $f_u': D_0 \to \mathbb{R}$ so that
\begin{equation}
\label{limit}
    \lim_{n \to \infty} \frac{f(u + t_n h_n) - f(u)}{t_n}
=
f_u'(h),
\end{equation}
for any $h \in D_0$ and arbitrary sequences $\{t_n\}\subset \mathbb{R}$ and $\{h_n\} \subset \mathbb{R}^d$ so that $t_n \searrow 0$, $h_n \to h$, and $u + t_n h_n \in D_f$ for all large $n \in \mathbb{N}$.
We refer $f'_u$ to the directional Hadamard derivative.
\end{definition}

In contrast to the usual (non-directional) Hadamard differentiability (e.g., \cite{van2000asymptotic}), directional Hadamard differentiability does not require the derivative to be linear, but allows for the
 Delta method.
\begin{theorem}[Delta method with a directionally Hadamard differentiable map: Theorem  1 in \cite{romisch2004delta} and Theorem 3 in \cite{sommerfeld2018inference}]
Let $f: D_f \subset \mathbb{R}^d \to \mathbb{R}$ be directionally Hadamard  differentiable at $u \in D_f$ tangentially to $D_0 \subset \mathbb{R}^d$ with derivative $f_u': D_0 \to \mathbb{R}$. Let $T_n$ be $\mathbb{R}^d$-valued random variables, so that $\rho_n(T_n - u) \stackrel{d}{\to} T$ for a sequence of numbers $\rho_n \to \infty$, and a random variable $T$ taking its values in $D_0$. Then, $\rho_n(f(T_n) - f(u)) \stackrel{d}{\to} f_u'(T)$.
\label{delta_method}
\end{theorem}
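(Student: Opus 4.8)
The plan is to reduce the claimed convergence in distribution to an almost-sure statement via the Skorokhod representation theorem, and then to read off the limit directly from the definition of directional Hadamard differentiability. The key observation is that the difference quotient appearing in \eqref{limit} \emph{is} the rescaled object $\rho_n(f(T_n) - f(u))$: upon setting $t_n = \rho_n^{-1} \searrow 0$ and $h_n = \rho_n(T_n - u)$, one has $u + t_n h_n = T_n \in D_f$ and $\rho_n(f(T_n)-f(u)) = \{f(u+t_nh_n)-f(u)\}/t_n$. The only thing standing between this identity and the conclusion is that the direction $h_n$ is random and the definition of differentiability is phrased for deterministic sequences, which is exactly what the Skorokhod coupling resolves.

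Concretely, I would first apply the Skorokhod representation theorem to the hypothesis $\rho_n(T_n - u) \stackrel{d}{\to} T$: there exist random vectors $\tilde{Z}_n \stackrel{d}{=} \rho_n(T_n - u)$ and $\tilde{Z} \stackrel{d}{=} T$ on a common probability space with $\tilde{Z}_n \to \tilde{Z}$ almost surely. Since $T$ takes values in $D_0$, so does $\tilde{Z}$ almost surely, and since $T_n$ takes values in $D_f$, each $\tilde{Z}_n$ takes values in $\rho_n(D_f - u)$ almost surely. I then set $\tilde{T}_n = u + \rho_n^{-1}\tilde{Z}_n$, so that $\tilde{T}_n \in D_f$ and $\rho_n(f(\tilde{T}_n) - f(u)) \stackrel{d}{=} \rho_n(f(T_n) - f(u))$ (both are the same fixed deterministic function of equidistributed inputs). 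Fixing $\omega$ outside a null set and writing $t_n = \rho_n^{-1}$, $h_n = \tilde{Z}_n(\omega) \to \tilde{Z}(\omega) =: h \in D_0$, the deterministic sequences $t_n$ and $h_n$ satisfy the hypotheses of \eqref{limit}, whence
\[
\rho_n\big(f(\tilde{T}_n(\omega)) - f(u)\big) = \frac{f(u + t_n h_n) - f(u)}{t_n} \longrightarrow f_u'(h) = f_u'(\tilde{Z}(\omega)).
\]
Thus $\rho_n(f(\tilde{T}_n) - f(u)) \to f_u'(\tilde{Z})$ almost surely, hence in distribution to $f_u'(\tilde{Z}) \stackrel{d}{=} f_u'(T)$; transporting back through the distributional equality of the two rescaled sequences gives $\rho_n(f(T_n) - f(u)) \stackrel{d}{\to} f_u'(T)$.

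The step requiring the most care is precisely the passage to a random direction $h = \tilde{Z}(\omega)$ in the display above: I must justify that $f_u'(\tilde{Z})$ is a genuine (Borel measurable) random variable. This follows from the fact that the directional Hadamard derivative $f_u'$ is automatically continuous on $D_0$ — a standard consequence of the definition \eqref{limit} — so the composition $f_u'(\tilde{Z})$ is well defined, and the almost-sure limit of the measurable variables $\rho_n(f(\tilde{T}_n)-f(u))$ is measurable as well. An alternative that sidesteps the explicit coupling is to run the entire argument through the extended continuous mapping theorem applied to the maps $g_n(h) = \rho_n\{f(u + \rho_n^{-1}h) - f(u)\}$, whose convergence $g_n(h_n) \to f_u'(h)$ along every $h_n \to h \in D_0$ is exactly directional Hadamard differentiability; however, I expect the Skorokhod route to be the cleaner and more self-contained of the two, since it isolates the measurability concern as the single nonroutine point.
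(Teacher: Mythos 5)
The paper never proves this theorem: it is imported as a known result from R\"omisch (2004) and Sommerfeld--Munk (2018), so there is no in-paper proof to compare against. Your argument is correct, and it is essentially the standard proof of the extended delta method given in those references: an almost-sure (Skorokhod--Dudley) representation of $\rho_n(T_n-u)$, the observation that $\rho_n(f(T_n)-f(u))$ \emph{is} the Hadamard difference quotient with $t_n=\rho_n^{-1}$ and $h_n=\rho_n(T_n-u)$, and pointwise passage to the limit; the alternative you sketch via the extended continuous mapping theorem applied to $g_n(h)=\rho_n\{f(u+\rho_n^{-1}h)-f(u)\}$ is the other standard route, and either one would serve as a faithful proof here. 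Two minor points of rigor are worth tightening. First, $\rho_n\to\infty$ only gives $t_n=\rho_n^{-1}\to 0^{+}$, whereas the paper's definition quantifies over monotone sequences $t_n\searrow 0$; this is repaired by a routine subsequence argument (every subsequence of $\{t_n\}$ contains a further strictly decreasing subsequence, along which the difference quotient converges to $f_u'(h)$, so the full sequence converges). Second, your transfers of ``takes values in $D_f$'' and ``takes values in $D_0$'' across equality in distribution, and the identity $\rho_n(f(\tilde T_n)-f(u))\stackrel{d}{=}\rho_n(f(T_n)-f(u))$, require $f$, $D_f$, $D_0$ to be Borel; this is implicit in the theorem's statement (otherwise $f(T_n)$ and $f_u'(T)$ would not be random variables), and in the paper's applications $D_f=\Delta_N\times\Delta_N$ and $D_0=\Omega_N\times\Omega_N$ are Borel, so nothing is lost. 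Your claim that $f_u'$ is continuous on $D_0$ is indeed a standard consequence of the definition (a diagonal argument with $t_k$ chosen inductively decreasing), so the measurability of $f_u'(\tilde Z)$ is secured.
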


Our approach using the directional Hadamard derivative is important when dealing with the projection-based Wasserstein distances. 
These distances are not differentiable in the sense of (non-directional) Hadamard differentiation, but will be shown to have a directional Hadamard derivative, which makes it possible to apply the delta method.

\subsection{Distributional limit for IPRW distance}
As our first main result, we derive a distributional limit of the empirical IPRW distance, $\mathrm{IW}_p(\hat{r}_n, \hat{s}_m)$.
To this end, we first show the directional Hadamard differentiability of the map $(r, s) \mapsto \mathrm{IW}_p^p(r, s)$ and derive its derivative. 
In preparation, we define sets of dual solutions for the optimization problem in \eqref{def_wasserstein}. 
Following \cite{sommerfeld2018inference}, given two distributions $r, s \in \Delta_N$ and a ground space $\mathcal{X} = \{x_1, ..., x_N\}$, we define  
\begin{equation}
\label{set1}
     \Phi_p^{\ast}(\mathcal{X})
     =
     \{u \in \mathbb{R}^N: u_i - u_j \le \|x_i - x_j\|^p, 1 \le i, j \le N\}
\end{equation}
and
\begin{equation}
\label{set2}
\begin{split}
    \Phi_p^{\ast}(r, s; \mathcal{X}) =\{(u, v) \in \mathbb{R}^{N} \times \mathbb{R}^N :
    \langle &u, r \rangle + \langle v, s \rangle = W_p^p(r, s; \mathcal{X}), \\
    &u_i + v_j \le \|x_i - x_j\|^p, 1 \le i, j \le N\}.
\end{split}
\end{equation}
These sets play a role in describing a limit distribution.
Additionally, we define a set of directions in which the limits are considered as $\Omega_N = \{h \in \mathbb{R}^N: \sum_{i=1}^N h_i = 0\}$.
Then, we achieve the following result on differentiability.
\begin{proposition}[Directional Hadamard differentiability of $\mathrm{IW}_p^p$]
 The map $\mathrm{IW}_p^p: \Delta_N \times \Delta_N \to \mathbb{R}, (r, s) \mapsto \mathrm{IW}_p^p(r, s)$ is directional Hadamard differentiable at all $(r, s) \in \Delta_N \times \Delta_N$ tangentially to $\Omega_N \times \Omega_N$ with derivative:
\begin{equation}
(h_1, h_2) \mapsto
     \int_{S_{d, k}}
\max_{(u, v) \in \Phi_p^{\ast}(r, s; \mathcal{X}_E)} 
\langle u, h_1 \rangle + \langle v, h_2 \rangle  
d\mu(E).
\end{equation}
\label{derive_IPRW}
\end{proposition}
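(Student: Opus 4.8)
The plan is to reduce the claim to the already-established differentiability of the unprojected Wasserstein distance, one direction $E$ at a time, and then push the limit through the integral over $S_{d,k}$ by dominated convergence. Fix $E \in S_{d,k}$ and regard $\mathcal{X}_E = \{E^\top x_1,\dots,E^\top x_N\} \subset \mathbb{R}^k$ as a finite ground space. The result of \cite{sommerfeld2018inference} then applies verbatim to $(r,s)\mapsto W_p^p(r,s;\mathcal{X}_E)$: it is directionally Hadamard differentiable at every $(r,s)\in\Delta_N\times\Delta_N$ tangentially to $\Omega_N\times\Omega_N$, with derivative $(h_1,h_2)\mapsto \max_{(u,v)\in\Phi_p^{\ast}(r,s;\mathcal{X}_E)}\langle u,h_1\rangle+\langle v,h_2\rangle$. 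Hence, for $h=(h_1,h_2)\in\Omega_N\times\Omega_N$ and any admissible sequences $t_n\searrow 0$, $(h_1^{(n)},h_2^{(n)})\to(h_1,h_2)$ with $(r+t_nh_1^{(n)},\,s+t_nh_2^{(n)})\in\Delta_N\times\Delta_N$ (which forces $h_i^{(n)}\in\Omega_N$, since the perturbed vectors must still sum to one), the difference quotients
\[
g_n(E) := \frac{W_p^p(r+t_nh_1^{(n)},\,s+t_nh_2^{(n)};\mathcal{X}_E)-W_p^p(r,s;\mathcal{X}_E)}{t_n}
\]
converge, for each fixed $E$, to the asserted integrand. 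Since the difference quotient of $\mathrm{IW}_p^p$ equals $\int_{S_{d,k}} g_n(E)\,d\mu(E)$, it suffices to interchange $\lim_n$ and $\int$.

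Next I would establish a domination bound uniform in both $n$ and $E$. The Kantorovich dual writes $W_p^p(r,s;\mathcal{X}_E)=\max\{\langle u,r\rangle+\langle v,s\rangle: u_i+v_j\le\|E^\top x_i-E^\top x_j\|^p\}$, so it is convex in $(r,s)$; testing an optimal dual pair at the perturbed point against $(r,s)$ and conversely gives $|W_p^p(r',s';\mathcal{X}_E)-W_p^p(r,s;\mathcal{X}_E)|\le\|u^\ast\|\,\|r'-r\|+\|v^\ast\|\,\|s'-s\|$. The essential point is that the dual optimizers may be chosen bounded \emph{uniformly in $E$}: because $x\mapsto E^\top x$ is $1$-Lipschitz (the columns of $E$ are orthonormal, so $E E^\top$ is an orthogonal projection), one has $\|E^\top x_i-E^\top x_j\|\le\|x_i-x_j\|$ and thus $\mathrm{diam}(\mathcal{X}_E)\le\mathrm{diam}(\mathcal{X})$ for every $E$; the cost entries, and hence the dual feasible sets $\Phi_p^{\ast}(\,\cdot\,;\mathcal{X}_E)$, therefore lie in a fixed bounded region depending only on $\mathrm{diam}(\mathcal{X})$. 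Moreover, since $r'-r$ and $s'-s$ have zero sum, the gauge ambiguity $u\mapsto u+c\mathbf{1}$, $v\mapsto v-c\mathbf{1}$ leaves the inner products unchanged, so the optimizers may be gauge-normalized to a direction-free bound $L$ without affecting the estimate. It follows that $|g_n(E)|\le L(\|h_1^{(n)}\|+\|h_2^{(n)}\|)$, and since $(h_1^{(n)},h_2^{(n)})\to(h_1,h_2)$ the right-hand side is bounded in $n$ by a constant $M$ independent of $E$.

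Finally, $M$ is a constant function on the compact set $S_{d,k}$, hence $\mu$-integrable, while $E\mapsto W_p^p(r,s;\mathcal{X}_E)$ is continuous (the linear-program value depends continuously on the cost, which depends continuously on $E$), so each $g_n$ is measurable. The dominated convergence theorem then licenses passing the limit inside the integral, yielding $\lim_n\int g_n\,d\mu=\int_{S_{d,k}}\max_{(u,v)\in\Phi_p^{\ast}(r,s;\mathcal{X}_E)}\bigl(\langle u,h_1\rangle+\langle v,h_2\rangle\bigr)\,d\mu(E)$, exactly the claimed derivative. I expect the main obstacle to be the uniformity of the domination across all projection directions; once the contraction property of orthogonal projections pins down a direction-free bound on the dual feasible sets, the rest is a routine dominated-convergence argument.
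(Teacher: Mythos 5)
Your proof is correct and follows essentially the same route as the paper's: pointwise convergence of the per-direction difference quotients via the Sommerfeld--Munk differentiability result for $W_p^p$ on finite spaces, followed by dominated convergence with a domination bound uniform in $E$ and in the sequence index. The only difference is in how that bound is obtained: you re-derive the Lipschitz estimate from Kantorovich duality with gauge-normalized potentials, using the sharper contraction bound $\mathrm{diam}(\mathcal{X}_E)\le\mathrm{diam}(\mathcal{X})$ (valid since $\|E^\top x\|\le\|x\|$ for $E\in S_{d,k}$), whereas the paper cites the Lipschitz continuity of $W_p^p$ directly from Sommerfeld--Munk and uses the cruder bound $\mathrm{diam}(\mathcal{X}_E)\le k\,\mathrm{diam}(\mathcal{X})$.
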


\begin{proof}
Let $\{h_{1\ell}\}, \{h_{2\ell} \} \subset \Omega_N$ be sequences satisfying  $h_{1\ell} \to h_1, h_{2\ell} \to h_2$ and let $t_\ell \searrow 0$ as $\ell \to \infty$.
Following the definition of the directional Hadamard derivative, we consider the following difference:
\begin{equation}
\label{intgral}
\begin{split}
    &\frac{\mathrm{IW}_p^p(r + t_\ell h_{1\ell}, s + t_\ell h_{2\ell}) - \mathrm{IW}_p^p(r, s)}{t_\ell}. \\
&=
\int_{S_{d, k}} \frac{W_p^p(r + t_\ell h_{1\ell}, s + t_\ell h_{2\ell}; \mathcal{X}_E) - W_p^p(r, s; \mathcal{X}_E)}{t_\ell}
d \mu(E),
\end{split}
\end{equation}
and consider its limit.
For each $E \in S_{d, k}$, Theorem 4 in \cite{sommerfeld2018inference} implies \[
\frac{W_p^p(r + t_\ell h_{1\ell}, s + t_\ell h_{2\ell}; \mathcal{X}_E) - W_p^p(r, s; \mathcal{X}_E)}{t_\ell}
\to
\max_{(u, v) \in \Phi_p^{\ast}(r, s; \mathcal{X}_E)} 
\langle u, h_1 \rangle + \langle v, h_2 \rangle,
\]
as $\ell \to \infty$. Furthermore, the Lipschitz continuity of the Wasserstein distance (Theorem 4 of \cite{sommerfeld2018inference}) implies
\begin{align*}
    \left|\frac{W_p^p(r + t_\ell h_{1\ell}, s + t_\ell h_{2\ell}; \mathcal{X}_E) - W_p^p(r, s; \mathcal{X}_E)}{t_\ell}\right|
    &\le
    \frac{p \text{diam}(\mathcal{X}_E)^p\|t_\ell(h_{1\ell}, h_{2\ell})\|}{t_\ell}  \\
    & \le
    p k^p \text{diam}(\mathcal{X})^p \|(h_{1\ell}, h_{2\ell})\|.
\end{align*}
 The last inequality follows $\text{diam}(\mathcal{X}_E) \leq k\text{diam}(\mathcal{X})$, which follows
 \[
 \|E^\top x \|
 \le
 |E_1^\top x| + \cdots + |E_k^\top x|
 \le
 \|E_1^\top\| \|x\| + \cdots +  \|E_k^\top\| \|x\|
 =
 k\|x\|
 \]
 for $E = (E_1, ..., E_k) \in S_{d, k}$ and $x \in \mathbb{R}^d$.
Because $\mathcal{X}$ is finite and $\{h_{1\ell}\}$ and $\{h_{2\ell} \}$ are convergent sequences,
$pk^p \text{diam}(\mathcal{X})^p \|(h_{1\ell}, h_{2\ell})\|$ is bounded by a constant not depending on $E$ and $\ell$. Therefore, by taking $\ell \to \infty$ in (\ref{intgral}), we can apply the dominated convergence theorem, and the claim then holds.
\end{proof}

We state  our main result on a limit distribution of the empirical IPRW distance.
This derivation is based on the differentiability in Proposition \ref{derive_IPRW} and the delta method in Theorem \ref{delta_method}.
For $r \in \Delta_N$, we define the covariance matrix as
 \begin{equation}
    \Sigma(r) = 
\begin{pmatrix}
r_1(1-r_1)  & -r_1r_2     & \cdots  & -r_1r_{N}\\
-r_2r_1     & r_2(1-r_2)  & \cdots  & -r_2r_{N} \\
\vdots      & \vdots      &  \ddots &  \vdots \\
-r_{N}r_1 & -r_{N}r_2 & \cdots  & r_{N}(1-r_{N})
\end{pmatrix}. 
 \end{equation}
Then, we obtain the following result.
\begin{theorem}[Distributional limits of $\mathrm{IW}_p(\hat{r}_n, \hat{s}_m)$]
\label{asymdist1}
Let $r, s \in \Delta_{N}$ be two probability distributions supported on $\mathcal{X} \subset \mathbb{R}^d$, let $X_1, ..., X_n \sim r, Y_1, ..., Y_m \sim s$ be i.i.d. $n$ and $m$ samples, and let $\hat{r}_n , \hat{s}_m$ be the corresponding empirical distributions. 
Let $G \sim N(0, \Sigma(r))$ and $H \sim N(0, \Sigma(s))$ be independent Gaussian random vectors.
Then, we have the followings:
\begin{enumerate}
    \item If $r = s$, and $n \land m \to \infty$ and $m/(n+m) \to \delta \in (0, 1)$, we have
\[
\left(\frac{nm}{n+m} \right)^{\frac{1}{2p}} \mathrm{IW}_p(\hat{r}_n, \hat{s}_m) \stackrel{d}{\to} \left(\int_{S_{d, k}} \max_{u \in \Phi_p^{\ast}(\mathcal{X}_E)} \langle G, u \rangle d \mu(E)\right)^{1/p},
\]
where $\Phi_p^{\ast}(\mathcal{X}_E)$ is given by (\ref{set1}).
\item If $r \neq s$, and $n \land m \to \infty$ and $m/(n+m) \to \delta \in (0, 1)$, we have
\begin{equation*}
\begin{split}
    &\sqrt{\frac{nm}{n+m}} \{ \mathrm{IW}_p(\hat{r}_n, \hat{s}_m) - \mathrm{IW}_p(r, s) \} \\
\stackrel{d}{\to} &
\frac{1}{p}\mathrm{IW}_p^{1-p}(r, s) 
 \int_{S_{d, k}}
\max_{(u, v) \in \Phi_p^{\ast}(r, s; \mathcal{X}_E)} 
\sqrt{\delta}
\langle G, u \rangle + \sqrt{1-\delta} \langle H, v \rangle  
d\mu(E),
\end{split}
\end{equation*}
where $\Phi_p^{\ast}(r, s; \mathcal{X}_E)$ is given by (\ref{set2}). 
\end{enumerate}
\end{theorem}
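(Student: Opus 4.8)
The plan is to reduce both cases to a single application of the delta method (Theorem \ref{delta_method}) to the $p$-th power map $\mathrm{IW}_p^p$, whose directional Hadamard derivative is supplied by Proposition \ref{derive_IPRW}. First I would establish the asymptotic normality of the empirical pair. Since $\hat{r}_n$ and $\hat{s}_m$ are empirical distributions of i.i.d.\ categorical samples, the multivariate central limit theorem gives $\sqrt{n}(\hat{r}_n - r) \stackrel{d}{\to} G$ and $\sqrt{m}(\hat{s}_m - s) \stackrel{d}{\to} H$. Writing $\sqrt{nm/(n+m)} = \sqrt{m/(n+m)}\,\sqrt{n} = \sqrt{n/(n+m)}\,\sqrt{m}$ and using $m/(n+m) \to \delta$, together with the independence of the two samples, yields
\begin{equation*}
\sqrt{\frac{nm}{n+m}}\left\{(\hat{r}_n, \hat{s}_m) - (r, s)\right\} \stackrel{d}{\to} (\sqrt{\delta}\,G,\, \sqrt{1-\delta}\,H).
\end{equation*}
The limit takes values in $\Omega_N \times \Omega_N$ because $\Sigma(r)\mathbf{1} = \Sigma(s)\mathbf{1} = 0$, so that $G, H$ have coordinates summing to zero almost surely; this is exactly the tangent set required by Proposition \ref{derive_IPRW}. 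Applying Theorem \ref{delta_method} with $f = \mathrm{IW}_p^p$, $u = (r,s)$, and $\rho_n = \sqrt{nm/(n+m)}$ then gives that $\sqrt{nm/(n+m)}\{\mathrm{IW}_p^p(\hat{r}_n, \hat{s}_m) - \mathrm{IW}_p^p(r, s)\}$ converges in law to
\begin{equation*}
L := \int_{S_{d,k}} \max_{(u,v) \in \Phi_p^{\ast}(r,s;\mathcal{X}_E)} \sqrt{\delta}\,\langle G, u\rangle + \sqrt{1-\delta}\,\langle H, v\rangle \; d\mu(E).
\end{equation*}

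For case (ii), where $r \neq s$, one has $\mathrm{IW}_p^p(r,s) > 0$, so the scalar map $t \mapsto t^{1/p}$ is differentiable at $\mathrm{IW}_p^p(r,s)$ with derivative $\tfrac{1}{p}\mathrm{IW}_p^{1-p}(r,s)$. Composing with the directional Hadamard derivative via the chain rule, and invoking Theorem \ref{delta_method} once more — or, equivalently, applying the ordinary delta method to the convergence of $\mathrm{IW}_p^p(\hat{r}_n,\hat{s}_m)$ at the positive point $\mathrm{IW}_p^p(r,s)$ — yields $\sqrt{nm/(n+m)}\{\mathrm{IW}_p(\hat{r}_n,\hat{s}_m) - \mathrm{IW}_p(r,s)\} \stackrel{d}{\to} \tfrac{1}{p}\mathrm{IW}_p^{1-p}(r,s)\,L$, which is the stated limit.

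The delicate case is (i), $r = s$, where the main obstacle lies, since $\mathrm{IW}_p^p(r,r) = 0$ and $t \mapsto t^{1/p}$ fails to be differentiable at the origin. The first task is to identify $\Phi_p^{\ast}(r,r;\mathcal{X}_E)$ explicitly. For $(u,v)$ in this set the diagonal constraints give $u_i + v_i \le \|E^\top x_i - E^\top x_i\|^p = 0$, while the normalization forces $\langle u + v, r\rangle = W_p^p(r,r;\mathcal{X}_E) = 0$; since every $r_i > 0$, this compels $u_i + v_i = 0$ for all $i$, i.e.\ $v = -u$, and the remaining constraints reduce to $u_i - u_j \le \|E^\top x_i - E^\top x_j\|^p$, which is precisely the definition of $\Phi_p^{\ast}(\mathcal{X}_E)$. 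Hence $\Phi_p^{\ast}(r,r;\mathcal{X}_E) = \{(u,-u): u \in \Phi_p^{\ast}(\mathcal{X}_E)\}$, and $L = \int_{S_{d,k}} \max_{u \in \Phi_p^{\ast}(\mathcal{X}_E)} \langle \sqrt{\delta}\,G - \sqrt{1-\delta}\,H,\, u\rangle\, d\mu(E)$. Because $G$ and $H$ are independent $N(0,\Sigma(r))$, the vector $\sqrt{\delta}\,G - \sqrt{1-\delta}\,H$ is again $N(0,\Sigma(r))$, hence equal in distribution to $G$; as the integral is a fixed measurable functional of this vector, $L \stackrel{d}{=} \int_{S_{d,k}} \max_{u \in \Phi_p^{\ast}(\mathcal{X}_E)} \langle G, u\rangle\, d\mu(E)$. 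Finally, since $0 \in \Phi_p^{\ast}(\mathcal{X}_E)$ makes the integrand, and thus $L$, nonnegative, I would sidestep the non-differentiability of $t^{1/p}$ at $0$ by writing $(nm/(n+m))^{1/(2p)}\mathrm{IW}_p(\hat{r}_n,\hat{s}_m) = (\rho_n\,\mathrm{IW}_p^p(\hat{r}_n,\hat{s}_m))^{1/p}$ and applying the continuous mapping theorem with the map $t \mapsto t^{1/p}$, which is continuous on $[0,\infty)$, to the convergence $\rho_n\,\mathrm{IW}_p^p(\hat{r}_n,\hat{s}_m) \stackrel{d}{\to} L$. This delivers the claimed limit. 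The one point requiring care throughout is the measurability and integrability in $E$ of the inner maxima, which follow from the Lipschitz estimate already exploited in the proof of Proposition \ref{derive_IPRW}.
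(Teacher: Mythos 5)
Your proposal is correct and follows essentially the same route as the paper's proof: the multivariate CLT for $(\hat{r}_n,\hat{s}_m)$, the delta method of Theorem \ref{delta_method} applied with the derivative from Proposition \ref{derive_IPRW}, the identification $\Phi_p^{\ast}(r,r;\mathcal{X}_E)=\{(u,-u):u\in\Phi_p^{\ast}(\mathcal{X}_E)\}$ together with $\sqrt{\delta}\,G-\sqrt{1-\delta}\,H\stackrel{d}{=}G$ and the continuous mapping theorem for case (i), and the chain rule (equivalently, a scalar delta method at the positive point $\mathrm{IW}_p^p(r,s)$) for case (ii). Your explicit verification of the dual-set identity and of the fact that the Gaussian limit lies in the tangent space $\Omega_N\times\Omega_N$ are details the paper states without proof, and they are handled correctly.
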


\begin{proof}
Our proof follows the same line as the proof of Theorem 1 of \cite{sommerfeld2018inference}. Under the assumption of the theorem, the central limit theorem implies
\[
\sqrt{\frac{nm}{n+m}} \{ (\hat{r}_n, \hat{s}_m) - (r, s) \}
\stackrel{d}{\to}
(\sqrt{\delta}G, \sqrt{1-\delta}H),
\]
as $n \land m \to \infty$. 

\textit{About (i)}:
An application of the delta method in Theorem \ref{delta_method} with the directional Hadamard derivative of the map $(r, s) \mapsto \mathrm{IW}_p^p(r, s)$ given in Proposition \ref{derive_IPRW} yields
    \begin{equation}
        \sqrt{\frac{nm}{n+m}} \mathrm{IW}_p^p(\hat{r}_n, \hat{s}_m)
        \stackrel{d}{\to}
        \int_{S_{d, k}}
\max_{(u, v) \in \Phi_p^{\ast}(r, s; \mathcal{X}_E)} 
\langle u, \sqrt{\delta}G \rangle 
+  \langle v, \sqrt{1-\delta}H \rangle  
d\mu(E).
\label{IW_delta}
\end{equation}
Note that, under $r = s$,  we have $(u, v) \in \Phi^\ast(r, s; \mathcal{X}_E)$ if and only if $u \in \Phi^\ast_p(\mathcal{X}_E)$ and $v = -u$. Therefore, with $G \stackrel{d}{=} H$, we have
\begin{align}
    \max_{(u, v) \in \Phi_p^{\ast}(r, s; \mathcal{X}_E)} 
\langle u, \sqrt{\delta}G \rangle 
+  \langle v, \sqrt{1-\delta}H \rangle 
& \stackrel{d}{=}
\max_{(u, v) \in \Phi_p^{\ast}( \mathcal{X}_E)}
  \sqrt{\delta} \langle G, u \rangle 
- \sqrt{1-\delta} \langle H, u \rangle \notag  \\
&\stackrel{d}{=}
\max_{(u, v) \in \Phi_p^{\ast}(\mathcal{X}_E)}
\sqrt{\delta + (1 - \delta)}
 \langle G, u \rangle \notag  \\
&=
 \max_{(u, v) \in \Phi_p^{\ast}( \mathcal{X}_E)}
 \langle G, u \rangle,
 \label{limit_representation}
\end{align}
for each $E \in S_{d, k}$. (\ref{IW_delta}), (\ref{limit_representation}), and the application of the continuous mapping theorem with a map $t \mapsto t^{1/p}$ provide the conclusion.

\textit{About (ii)}:
Consider a map $(r, s) \mapsto \mathrm{IW}_p(r, s) = (\mathrm{IW}_p^p(r, s))^{1/p}$. By Proposition \ref{derive_IPRW} and the chain rule for directional Hadamard derivatives (Proposition 3.6 of \cite{shapiro1990concepts}), the directional Hadamard derivative of this map at $(r, s)$ is given by
\[
(h_1, h_2)
\mapsto
\frac{1}{p}\mathrm{IW}_p^{1-p}(r, s)
 \int_{S_{d, k}}
\max_{(u, v) \in \Phi_p^{\ast}(r, s; \mathcal{X}_E)} 
\langle u, h_1 \rangle + \langle v, h_2 \rangle  
d\mu(E).
\]
An application of the delta method in Theorem \ref{delta_method} yields the conclusion.
\end{proof}

The scaling rate in Theorem \ref{asymdist1} is independent of the dimension of underlying space $\mathcal{X}$, which is the same as the other extensions of the Wasserstein distance on finite spaces \citep{sommerfeld2018inference, klatt2020empirical, bigot2019central}.
Moreover,
for $p \ge 2$, the scaling rate in the case of $r = s$ (i.e., $n^{-1/2p}$) is slower than that in the case of $ r \neq s$ (i.e., $n^{-1/2}$). This implies that $\mathrm{IW}_p(\hat{r}_n, \hat{s}_m)$ converges faster under $r = s$ for $p \ge 2$. 

\subsection{Distributional limit for regularized PRW distance}
As our second main result, we derive a distributional limit of the empirical regularized PRW distance, $\mathrm{PW}_{p,\lambda}(\hat{r}_n, \hat{s}_m)$.
To study the PRW distance, we need to introduce the entropic regularization to add smoothness to the Wasserstein distance. For the regularization of the Wasserstein distance in a finite space, we refer to \cite{klatt2020empirical}.

We derive a distributional limit by showing the directional Hadamard differentiability of the regularized PRW distance and apply the delta method.
Our proof relies on the results of the sensitivity analysis in nonlinear programming \citep{fiacco1983introduction}, which is as follows.

Let us consider the following optimization problem containing a parameter $u \in U$ in the objective function:
\[
\max_{x \in \mathbb{R}^n} f(x, u) \quad \text{subject to} \quad x \in S, 
\]
where $f: \mathbb{R}^n \times U \to \mathbb{R}$, and $\nabla_uf$ is continuous on $\mathbb{R}^n \times U$. Moreover, feasible region $S \subset \mathbb{R}^n$ is a nonempty closed set and parameter set $U \subset \mathbb{R}^p$ is open and bounded. We define the optimal value function $\phi: U \to \mathbb{R}$ and the optimal set mapping $\Phi: U \to \mathcal{P}(\mathbb{R}^n)$ as:
$
\phi(u) = \max\{ f(x, u): x \in S \}
$
and
$
\Phi(u) = \{x \in S: \phi(u) = f(x, u) \}.
$
Then, we have the following result.

\begin{theorem}[Theorem 2.3.1 in \cite{fiacco1983introduction}]
For all $u \in U$ and in any direction $h \in \mathbb{R}^p$, the optimal value function $\phi$ is directionally differentiable in the sense of
 G\^{a}teaux; that is, limit (\ref{limit}) exists for a fixed $h$ and not a sequence $h_n \to h$. Additionally, the derivative is given by
 \[
 h \mapsto \max_{x \in \Phi(u)} \langle \nabla_uf(x, u), h \rangle.
 \]
 \label{sensitivity}
\end{theorem}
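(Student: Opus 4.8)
The plan is to prove the classical Danskin--type (envelope) formula by establishing the one-sided directional derivative
\[
\phi'(u; h) := \lim_{t \searrow 0} \frac{\phi(u + th) - \phi(u)}{t} = \max_{x \in \Phi(u)} \langle \nabla_u f(x, u), h \rangle
\]
for each fixed $u \in U$ and $h \in \mathbb{R}^p$; since $U$ is open, $u + th \in U$ for all small $t > 0$, so the quotient is well defined, and because $\Phi(u)$ is compact (a closed subset of $S$ that is bounded near $u$) and $x \mapsto \langle \nabla_u f(x,u), h \rangle$ is continuous, the maximum on the right is attained. I would split the argument into a lower bound on the $\liminf$ and an upper bound on the $\limsup$ and then match them, which simultaneously establishes that the full limit exists. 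Note that working with Gâteaux differentiability (fixed $h$, not a sequence $h_\ell \to h$) removes any need for uniformity in the direction and keeps the perturbation one-dimensional.

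For the lower bound I would fix an arbitrary maximizer $\bar{x} \in \Phi(u)$, so that $\phi(u) = f(\bar{x}, u)$, and observe that $\bar{x} \in S$ remains feasible for every parameter value, whence $\phi(u + th) \geq f(\bar{x}, u + th)$. Subtracting and dividing by $t$ gives
\[
\frac{\phi(u+th) - \phi(u)}{t} \geq \frac{f(\bar{x}, u + th) - f(\bar{x}, u)}{t}.
\]
Continuity of $\nabla_u f$ makes $f(\bar{x}, \cdot)$ continuously differentiable, so the right-hand side tends to $\langle \nabla_u f(\bar{x}, u), h \rangle$ as $t \searrow 0$. Taking the supremum over $\bar{x} \in \Phi(u)$ yields $\liminf_{t \searrow 0} \tfrac{\phi(u+th)-\phi(u)}{t} \geq \max_{x \in \Phi(u)} \langle \nabla_u f(x,u), h\rangle$.

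For the upper bound I would, for each small $t > 0$, select a maximizer $x_t \in \Phi(u + th)$, so that $\phi(u + th) = f(x_t, u + th)$, while feasibility of $x_t$ at the base point gives $\phi(u) \geq f(x_t, u)$. Hence, applying the mean value theorem to $\tau \mapsto f(x_t, u + \tau h)$,
\[
\frac{\phi(u+th) - \phi(u)}{t} \leq \frac{f(x_t, u + th) - f(x_t, u)}{t} = \langle \nabla_u f(x_t, u + \theta_t t h), h\rangle
\]
for some $\theta_t \in (0,1)$. Along any sequence $t_\ell \searrow 0$ realizing the $\limsup$, I would extract a convergent subsequence $x_{t_\ell} \to \bar{x}$, argue that $\bar{x} \in \Phi(u)$, and use continuity of $\nabla_u f$ together with $u + \theta_{t_\ell} t_\ell h \to u$ to conclude the right-hand side converges to $\langle \nabla_u f(\bar{x}, u), h\rangle \leq \max_{x \in \Phi(u)} \langle \nabla_u f(x,u), h\rangle$. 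Matching the two bounds finishes the proof.

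The main obstacle is this last limiting step in the upper bound, which rests on two points: (a) existence and uniform boundedness of the perturbed maximizers $x_t$ for $t$ near $0$, so that a convergent subsequence can be extracted, and (b) upper semicontinuity (closedness of the graph) of the solution map $\Phi$ at $u$. Point (a) I would secure from the standing structure that makes $\phi$ finite-valued and the maximum attained on the closed set $S$, keeping solutions in a compact neighborhood as the bounded parameter $u+th$ varies. Point (b) follows from a standard closed-graph argument: if $x_{t_\ell} \in \Phi(u + t_\ell h)$ with $x_{t_\ell} \to \bar{x}$ and $u + t_\ell h \to u$, then for every $y \in S$ we have $f(x_{t_\ell}, u + t_\ell h) \geq f(y, u + t_\ell h)$, and passing to the limit by joint continuity of $f$ gives $f(\bar{x}, u) \geq f(y, u)$, i.e. $\bar{x} \in \Phi(u)$. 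Establishing the requisite boundedness of $\{x_t\}$ from the hypotheses is the only delicate ingredient; everything else is the routine two-sided squeeze.
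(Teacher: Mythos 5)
The paper never proves this statement: it is quoted as Theorem 2.3.1 of Fiacco (1983) and used as a black box, so your proposal has to be judged against the classical proof of that result (Danskin's envelope theorem), and your outline is exactly that proof --- lower bound on the $\liminf$ from feasibility of a fixed optimizer, upper bound on the $\limsup$ via the mean value theorem plus a closed-graph argument for the solution map $\Phi$, then the two-sided squeeze. All of those steps are sound as far as they go.

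The genuine gap is precisely your point (a), and your proposed justification of it does not work. You claim boundedness of the perturbed maximizers $\{x_t\}$ is ``secured from the standing structure that makes $\phi$ finite-valued and the maximum attained on the closed set $S$.'' It is not: with $S$ closed but unbounded, attainment at every parameter value does not prevent maximizers from escaping to infinity as $t \searrow 0$, and under the hypotheses exactly as stated the theorem is in fact false. Take $n = p = 1$, $S = \mathbb{R}$, $U = (-1/2, 1/2)$, $b(x) = \max(0, 1-|x|)$, and
\[
f(x, u) \;=\; \sum_{k \ge 1} \Bigl( -\tfrac{1}{k^2} + u \bigl( 1 + \tfrac{1}{k} \bigr) \Bigr)\, b(x - 2k),
\]
a locally finite sum, so $f$ and $\nabla_u f$ are continuous, and the maximum over $S$ is attained for every $u \in U$. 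For $u \le 0$ one has $\phi(u) = 0$, and $\Phi(0) = \mathbb{R} \setminus \bigcup_{k \ge 1} (2k-1, 2k+1)$, on which $\nabla_u f(\cdot, 0) \equiv 0$; the claimed formula therefore gives $\phi'(0;1) = 0$. But for $0 < u < 1/2$ the peak values $p_k(u) = u + (uk-1)/k^2$ satisfy $u < \sup_k p_k(u) \le u + u^2/4$, with the supremum attained at a finite $k^\ast(u) \approx 2/u$ that tends to infinity as $u \searrow 0$; hence $u < \phi(u) \le u + u^2/4$ and $\phi'(0;1) = 1 \neq 0$. (Note also that your parenthetical claim that $\Phi(u)$ is compact fails for the same reason: in this example $\Phi(0)$ is unbounded.) So no argument can close point (a) from the stated hypotheses alone: one must assume compactness of $S$, as in Danskin's theorem and the classical forms of this result, or at least local uniform boundedness of the solution sets near $u$ --- a hypothesis the paper's restatement silently drops. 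This is the single missing ingredient; once $S$ is compact, which is all the paper ever needs since it applies the theorem with $S = S_{d,k}$, a compact set, your subsequence extraction is legitimate and the rest of your argument goes through.
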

We employ this result to demonstrate the directional Hadamard differentiability of the regularized PRW distance.

For technical reasons, we reformulate regularized optimal transport problem (\ref{EOT}). 
The transport condition in (\ref{transport}) can be stated in terms of only the $2N - 1$ equality constraints instead of $2N$, which allows for linearly independent constraints. Following \cite{klatt2020empirical}, we denote by $A_{\star}$ and $s_{\star}$ the deletion of the last row of matrix $A$ in (\ref{transport}) and the last entry of vector $s \in \Delta_N$, respectively. We denote the set of such $s_\star$ as $(\Delta_N)_{\star}$. Using constraint $\sum_{i=1}^N s_i = 1$, we can identify vector $s \in \Delta_N$ with $s_\star \in (\Delta_N)_\star$. To apply Theorem \ref{sensitivity} to the regularized PRW distance, we show the continuous differentiability of the regularized optimal transport plan with projection in the following lemma.

\begin{lemma}
\label{difcont}
Let $p \ge 1$ be even and $\lambda > 0$. The map $(r, s_{\star}, E) \mapsto \pi_{p, \lambda}(r, s_{\star}; \mathcal{X}_E)$ is continuously differentiable 
on $\Delta_N \times (\Delta_N)_\star \times \mathbb{R}^{d \times k}$. In addition, the matrix of partial derivatives with respect to $(r, s_\star)$ at $(r_0, (s_0)_{\star}, E_0)$ is given by
\[
\nabla_{(r, s_{\star})} \pi_{p, \lambda}(r_0, s_{0\star}; \mathcal{X}_{E_0}) = D A_{\star}^\top  (A_\star D A_\star^\top )^{-1} \in \mathbb{R}^{N^2 \times (2N-1)},
\]
where $D \in \mathbb{R}^{N^2 \times N^2}$ is a
diagonal matrix whose $(j,j)$-entry is a $j$-th element of $\pi_{p, \lambda}(r_0, s_{0 \star}; \mathcal{X}_{E_0})$.
\end{lemma}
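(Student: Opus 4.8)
The plan is to characterize $\pi_{p,\lambda}(r, s_\star; \mathcal{X}_E)$ as the unique solution of the first-order optimality system of the strictly convex problem \eqref{EOT} and then invoke the implicit function theorem to obtain both the continuous differentiability and the explicit derivative. Writing $c = c_p(\mathcal{X}_E)$ and using the reduced constraint $A_\star \pi = (r, s_\star)^\top =: b$, I would form the Lagrangian $L(\pi, \alpha) = \langle c, \pi\rangle + \lambda \varphi(\pi) - \langle \alpha, A_\star \pi - b\rangle$ with multiplier $\alpha \in \mathbb{R}^{2N-1}$. Since $\varphi'(\pi_i) = \log \pi_i \to -\infty$ as $\pi_i \searrow 0$, the unique minimizer lies in the interior $\mathbb{R}^{N^2}_{>0}$, so the nonnegativity constraint never binds and stationarity reads $c + \lambda \log \pi - A_\star^\top \alpha = 0$ componentwise (equivalently $\pi_i = \exp(((A_\star^\top \alpha)_i - c_i)/\lambda) > 0$).

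Next I would encode the optimality system as the zero set of
\[
F(\pi, \alpha; r, s_\star, E) = \begin{pmatrix} c_p(\mathcal{X}_E) + \lambda \log \pi - A_\star^\top \alpha \\ A_\star \pi - (r, s_\star)^\top \end{pmatrix},
\]
on the domain $\mathbb{R}^{N^2}_{>0} \times \mathbb{R}^{2N-1} \times \Delta_N \times (\Delta_N)_\star \times \mathbb{R}^{d\times k}$, and check that $F$ is $C^1$ in every argument. The only nontrivial dependence is on $E$: the entry of $c_p(\mathcal{X}_E)$ indexed by $(i-1)N+j$ equals $\|E^\top(x_i - x_j)\|^p = \bigl(\sum_{\ell=1}^k (E_\ell^\top(x_i-x_j))^2\bigr)^{p/2}$, which is a polynomial in the entries of $E$ \emph{precisely because} $p$ is even; this is where the parity hypothesis enters and is what keeps the map smooth even where projected points collide. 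The terms $\lambda \log \pi$, $A_\star^\top \alpha$, and $A_\star \pi - b$ are smooth on the stated domain.

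The key structural step is the invertibility of $\partial F / \partial(\pi, \alpha)$ at a solution, which is the saddle-point matrix
\[
M = \begin{pmatrix} \lambda D^{-1} & -A_\star^\top \\ A_\star & 0 \end{pmatrix}, \qquad D = \mathrm{diag}(\pi).
\]
Because $\pi > 0$ and $\lambda > 0$, the block $\lambda D^{-1}$ is positive definite, and $A_\star$ has full row rank $2N-1$ (the redundant row of $A$ having been deleted, following \cite{klatt2020empirical}); hence $M$ is nonsingular and, moreover, $A_\star D A_\star^\top$ is positive definite and thus invertible. The implicit function theorem then produces a $C^1$ solution map $(r, s_\star, E) \mapsto (\pi, \alpha)$, and projecting onto the $\pi$-coordinate yields the asserted continuous differentiability of $\pi_{p,\lambda}$.

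Finally, to extract the formula I would differentiate $F = 0$ with respect to $b = (r, s_\star)$. Since $\partial F/\partial b = (0, -I_{2N-1})^\top$, the IFT gives $\partial(\pi, \alpha)/\partial b = M^{-1}(0, I)^\top$. The top block then reads $\lambda D^{-1}(\partial \pi/\partial b) = A_\star^\top(\partial \alpha/\partial b)$, i.e.\ $\partial\pi/\partial b = \tfrac1\lambda D A_\star^\top(\partial\alpha/\partial b)$; substituting into the bottom block $A_\star(\partial\pi/\partial b) = I$ gives $\partial\alpha/\partial b = \lambda (A_\star D A_\star^\top)^{-1}$, whence $\partial \pi/\partial b = D A_\star^\top (A_\star D A_\star^\top)^{-1}$, exactly the claimed expression. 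The main obstacle is confirming that the problem is genuinely an interior, equality-constrained one with a nonsingular KKT matrix — combining strict positivity of the entropic solution with the full row rank of $A_\star$ — since this is what simultaneously legitimizes the smooth stationarity equation and the application of the implicit function theorem.
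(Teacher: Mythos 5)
Your proposal is correct and follows essentially the same route as the paper: both characterize $\pi_{p,\lambda}$ via the KKT system $c_p(\mathcal{X}_E) + \lambda\nabla\varphi(\pi) - A_\star^\top\mu = 0$, $A_\star\pi = (r,s_\star)^\top$, and apply the implicit function theorem, using positive definiteness of $\lambda\nabla^2\varphi(\pi) = \lambda D^{-1}$ together with the full row rank of $A_\star$ to invert the saddle-point matrix. Your write-up is in fact somewhat more self-contained: you justify interiority of the plan via the entropy barrier, make explicit why evenness of $p$ gives smoothness of $E \mapsto c_p(\mathcal{X}_E)$ (a polynomial in the entries of $E$), and derive the formula $D A_\star^\top(A_\star D A_\star^\top)^{-1}$ by block elimination, where the paper delegates these points to Slater's condition and to Theorem 2.3 and Example 2.6 of Klatt et al.
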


\begin{proof}
Our proof is similar to the proof of Theorem 2.3 in \cite{klatt2020empirical}, which shows the continuous differentiability of a regularized optimal transport plan without projection.
Note that regularized optimal transport \eqref{EOT} with marginal $r_0$ and $s_0$ satisfies the Slater's constraint qualification (Proposition 26.18 in \cite{bauschke2011convex}). Therefore, strong duality holds and the dual problem admits an optimal solution.
In addition, we can characterize the regularized optimal transport plan $\pi_{p, \lambda}$ and its corresponding optimal dual solution $\mu_{p, \lambda} \in \mathbb{R}^{2N-1}$ by the necessary and sufficient Karush-Kuhn-Tucker conditions: 
\[
c_p(\mathcal{X}_E) + \lambda \nabla \phi(\pi_{p, \lambda})^\top  - A_{\star}^\top  \mu_{p, \lambda} = 0, 
\quad
A_\star \pi_{p, \lambda} - (r_0, s_{0 \star})^\top  = 0.
\]
We now obtain the statement  by applying the implicit function theorem to this system of equations. Let us define a function
$F: \mathbb{R}^{N^2} \times \mathbb{R}^{2N - 1} \times \mathbb{R}^{2N-1} \times \mathbb{R}^{d \times k} \to \mathbb{R}^{N^2 + 2N - 1}$
 by 
\[
F(\pi, \mu, (r, s_{\star}), E)
=
\left(
  \begin{array}{c}
     c_p(\mathcal{X}_E) + \lambda \nabla \phi(\pi)^\top  - A_\star^\top  \mu \\
     A_\star \pi - (r, s_\star)^\top 
  \end{array}
 \right).
\]
Because $p$ is even, $F$ is continuously differentiable in the neighborhood of the specific point $(\pi_{p, \lambda}, \mu_{p, \lambda}, (r_0, s_{0\star}), E_0)$ with $F(\pi_{p, \lambda}, \mu_{p, \lambda}, (r_0, s_{0\star}), E_0) = 0$.
The matrix of the partial derivatives of $F$ with respect to $\pi$ and $\mu$ is given by
\[
\nabla_{(\pi, \mu)} F(\pi_{p, \lambda}, \mu_{p, \lambda}, (r_0, s_{0\star}), E_0)
=
\begin{pmatrix}
\lambda \nabla^2 \phi(\pi_{p, \lambda}) & -A_\star^\top \\
A_\star & 0 \\
\end{pmatrix} \in \mathbb{R}^{(N^2+2N-1) \times(N^2+2N-1)  }.
\]
This matrix is non-singular because $\lambda > 0$, the Hessian $\nabla \phi(\pi_{p, \lambda})$ is positive definite (Section 2.1 in \cite{klatt2020empirical}]) and the matrix $A_\star^\top$ has full rank.
As a result, the implicit function theorem guarantees the existence of a continuously differentiable function that parameterizes the regularized optimal transport plan with projection in the neighborhood of $(r_0, s_{0\star}, E_0)$.
The computation of the partial derivative form is directly followed by 
Theorem 2.3 and Example 2.6 in \cite{klatt2020empirical}.
\end{proof}

Now, we show the directionally Hadamard differentiability of the  regularized PRW distance. 
Given $(r, s_\star) \in \Delta_N \times (\Delta_N)_\star$, we define $\Psi_{p}^{\ast}(r, s_\star)$ as the set of directions that maximizes the regularized optimal transport distance between the projections of $r$ and $s$, that is,
\[
\Psi_p^{\ast}(r, s_{\star}) 
=
\{E \in S_{d, k}: W_{p, \lambda}(r, s; \mathcal{X}_E)
=
\mathrm{PW}_{p, \lambda}(r, s)
\}.
\]
We denote by $h_\star$ the deletion of the last entry of vector $h \in \Omega_N$ and the set of such $h_\star$ as $(\Omega_N)_\star$.
\begin{proposition}
Let $p \ge 1$ be even and let $\lambda > 0$. The map $(r, s_\star) \mapsto \mathrm{PW}_{p, \lambda}(r, s_\star)$ is directionally Hadamard differentiable at all $(r, s_\star) \in \Delta_N \times (\Delta_N)_{\star}$ tangentially to $\Omega_N \times (\Omega_N)_{\star}$ with the following derivative:
\begin{equation}
    (h_1, h_{2 \star}) \mapsto \max_{E \in \Psi_p^{\ast}(r, s_{\star})}
\langle \gamma^\top  D A_\star^\top  (A_\star D A_\star^\top )^{-1}, (h_1, h_{2\star})
\rangle, 
\label{deriv}
\end{equation}
where
\begin{equation}
\gamma = \frac{1}{p} \langle c_p(\mathcal{X}_E), \pi_{p, \lambda}(r, s_\star, \mathcal{X}_{E})\rangle^{\frac{1}{p}-1} c_p(\mathcal{X}_E)
\in \mathbb{R}^{N^2},
\label{gamma}
\end{equation}
and $D \in \mathbb{R}^{N^2 \times N^2}$ is a
diagonal matrix whose $(j,j)$-entry is a $j$-th element of $\pi_{p, \lambda}(r, s_\star, \mathcal{X}_E)$ for $j = 1,...,N^2$.
\label{derive_PRW}
\end{proposition}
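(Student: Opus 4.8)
The plan is to view $\mathrm{PW}_{p,\lambda}$ as the optimal value of a parametric maximization problem in the decision variable $E$ with parameter $(r,s_\star)$, apply the sensitivity result of Theorem~\ref{sensitivity} to obtain a G\^ateaux directional derivative, and then upgrade it to a Hadamard directional derivative using local Lipschitz continuity. Concretely, I would set the parameter $u=(r,s_\star)$, the feasible region $S=S_{d,k}$, and the objective
\[
f(E,u)=W_{p,\lambda}(r,s;\mathcal{X}_E)=\langle c_p(\mathcal{X}_E),\,\pi_{p,\lambda}(r,s_\star;\mathcal{X}_E)\rangle^{1/p},
\]
so that $\mathrm{PW}_{p,\lambda}(r,s_\star)=\max_{E\in S}f(E,u)$ is the optimal value function $\phi(u)$ and the optimal set mapping $\Phi(u)=\{E\in S: f(E,u)=\phi(u)\}$ coincides with $\Psi_p^{\ast}(r,s_\star)$.

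To invoke Theorem~\ref{sensitivity}, I would verify its hypotheses: $S_{d,k}$ is nonempty, closed, and in fact compact (so the maximum is attained and $\Phi(u)\neq\emptyset$); the parameter set can be taken to be a small open bounded ball $U$ around $(r,s_\star)$; and $\nabla_u f$ is continuous on $S_{d,k}\times U$. The last point is where Lemma~\ref{difcont} does the heavy lifting, since the implicit-function-theorem argument in its proof is local and produces a $C^1$ extension of $(r,s_\star,E)\mapsto\pi_{p,\lambda}$ to an open neighborhood (off the simplex), together with the explicit Jacobian $\nabla_{(r,s_\star)}\pi_{p,\lambda}=DA_\star^\top(A_\star D A_\star^\top)^{-1}$. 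The chain rule then gives
\[
\nabla_u f(E,u)=\frac{1}{p}\langle c_p(\mathcal{X}_E),\pi_{p,\lambda}\rangle^{\frac1p-1}c_p(\mathcal{X}_E)^\top D A_\star^\top(A_\star D A_\star^\top)^{-1}=\gamma^\top D A_\star^\top(A_\star D A_\star^\top)^{-1},
\]
with $\gamma$ as in \eqref{gamma}. Substituting $\Phi(u)=\Psi_p^{\ast}(r,s_\star)$ and this gradient into the conclusion $h\mapsto\max_{E\in\Phi(u)}\langle\nabla_u f(E,u),h\rangle$ of Theorem~\ref{sensitivity} reproduces exactly the map \eqref{deriv} as the G\^ateaux directional derivative in every direction $h=(h_1,h_{2\star})$, in particular for $h\in\Omega_N\times(\Omega_N)_\star$.

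The remaining step is to promote G\^ateaux differentiability (limit along the fixed ray $u+th$) to Hadamard differentiability (limit along arbitrary $h_n\to h$, $t_n\searrow0$ with $u+t_nh_n$ feasible). The bridge is local Lipschitz continuity of $\phi=\mathrm{PW}_{p,\lambda}$: each $f(\cdot,u)$ is $C^1$ on the compact $S_{d,k}$ with gradient jointly continuous in $(E,u)$, so the family $\{f(E,\cdot)\}_{E\in S_{d,k}}$ is uniformly Lipschitz near $(r,s_\star)$ with some constant $L$, and a pointwise maximum of uniformly Lipschitz functions is Lipschitz with the same constant. Given this, for $h_n\to h$ and $t_n\searrow0$ I would split
\[
\frac{\phi(u+t_nh_n)-\phi(u)}{t_n}-\phi'_u(h)=\underbrace{\frac{\phi(u+t_nh_n)-\phi(u+t_nh)}{t_n}}_{\le L\|h_n-h\|\to0}+\underbrace{\left(\frac{\phi(u+t_nh)-\phi(u)}{t_n}-\phi'_u(h)\right)}_{\to0\text{ by G\^ateaux}},
\]
and conclude that the Hadamard limit exists and equals \eqref{deriv}; this is precisely the standard mechanism (cf.\ \cite{shapiro1990concepts}) by which local Lipschitzness turns directional G\^ateaux into directional Hadamard differentiability tangentially to $\Omega_N\times(\Omega_N)_\star$.

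The hard part will be securing the continuity of $\nabla_u f$ across the \emph{entire} feasible set rather than only near the optimizers. Two technical points need care: confirming that the construction of Lemma~\ref{difcont} genuinely yields a $C^1$ parametrization of $\pi_{p,\lambda}$ on an \emph{open} parameter neighborhood (so Theorem~\ref{sensitivity}'s openness hypothesis is met), and controlling the outer power $t\mapsto t^{1/p}$, which is smooth only where $\langle c_p(\mathcal{X}_E),\pi_{p,\lambda}\rangle$ stays bounded away from $0$ uniformly over $E\in S_{d,k}$. The latter holds because the regularized plan has strictly positive entries, so the inner product is continuous and strictly positive on the compact $S_{d,k}$ as long as no $E$ collapses the support to a single point. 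The smoothness of $f$ in the decision variable $E$ that underlies the well-behavedness of $\Phi$ and $\nabla_u f$ is inherited from the evenness of $p$, which renders $c_p(\mathcal{X}_E)$ polynomial in $E$. Once these continuity facts are in place, the compactness of $S_{d,k}$, the identification $\Phi(u)=\Psi_p^{\ast}$, and the Lipschitz upgrade are all routine.
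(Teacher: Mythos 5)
Your proposal is correct and follows essentially the same route as the paper's proof: cast $\mathrm{PW}_{p,\lambda}$ as the optimal value of a parametric program over $S_{d,k}$, invoke Theorem~\ref{sensitivity} together with the $C^1$ regularity and Jacobian formula from Lemma~\ref{difcont} to get the G\^ateaux directional derivative \eqref{deriv}, and then upgrade to Hadamard directional differentiability via local Lipschitz continuity of the max (your explicit splitting argument is exactly the mechanism behind Proposition~3.5 of \cite{shapiro1990concepts}, which the paper cites for this step).
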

\begin{proof}
Since the regularized optimal transport distance is defined as
\[
W_{p, \lambda}(r, s_\star; \mathcal{X}_E) = \langle c_p(\mathcal{X}_E), \pi_{p, \lambda}(r, s_\star; \mathcal{X}_E) \rangle^{1/p},
\]
it follows from Lemma \ref{difcont} that 
the map $(r, s_\star, E) \mapsto W_{p, \lambda}(r, s_\star, \mathcal{X}_E)$ can be continuously differentiated on $\Delta_N \times (\Delta_N)_\star \times \mathbb{R}^{d \times k}$.
 Moreover, the matrix of partial derivatives with respect to $(r, s_\ast)$ is given by
\[
\nabla_{(r, s_\star)} W_{p, \lambda}(r, s_\star; \mathcal{X}_E)
=
\gamma^\top  D A_\star^\top  (A_\star D A_\star^\top )^{-1},
\]
where $\gamma$ is the gradient of function $\pi \mapsto \langle c_p(\mathcal{X}_E), \pi \rangle^{1/p}$ evaluated in the regularized transport plan $\pi_{p, \lambda}(r, s_\star; \mathcal{X}_E)$, which is formally defined by (\ref{gamma}).
Consequently, Theorem \ref{sensitivity} implies that the map $(r, s_\star) \mapsto \mathrm{PW}_{p, \lambda}(r, s_\star)$ is directionally differentiable with derivative (\ref{deriv}) in the sense of G\^{a}teaux. 
To see this is also a directionally derivative in the Hadamard sense, 
it is sufficient to show the local Lipschitz continuity of this map (Proposition 3.5 of \cite{shapiro1990concepts}). 
To this end, we fix a closed set $S_0 \subset \Delta_N \times (\Delta_N)_\star$. For any $(r, s_\star), (r', s'_\star) \in S_0$, we have
\begin{align}
\label{maxineq}
    |\mathrm{PW}_{p, \lambda}(r, s_\star) - \mathrm{PW}_{p, \lambda}(r', s'_\star)|
    \le
    \max_{E \in S_{d, k}} |W_{p, \lambda}(r, s_\star; \mathcal{X}_E) -
    W_{p, \lambda}(r', s'_\star; \mathcal{X}_E)|.
\end{align}
Because the map $(r, s_\star, E) \mapsto W_{p, \lambda}(r, s_\star, \mathcal{X}_E)$ is continuously differentiable, 
there exists a constant $C > 0$ that does not depend on $(r, s_\star), (r', s'_\star)$ or $E$ so that
\begin{equation}
\label{lip1}
    |W_{p, \lambda}(r, s_\star; \mathcal{X}_E) -
    W_{p, \lambda}(r', s'_\star; \mathcal{X}_E)|
\le
C \|(r, s_\star) - (r', s'_\star) \|.
\end{equation}
A combination of equations (\ref{maxineq}) and  (\ref{lip1}) leads to local Lipschitz continuity of the map  $(r, s_\star) \mapsto \mathrm{PW}_{p, \lambda}(r, s_\star)$.
This completes the proof.
\end{proof}

The next theorem states our main result regarding the limit distribution of the empirically regularized PRW distance. 
\begin{theorem}[Distributional limit of $\mathrm{PW}_{p, \lambda}(\hat{r}_n, \hat{s}_m)$]
\label{limitdist2}
Let $p \ge 1$ be even and $\lambda > 0$. Under the assumptions of Theorem \ref{asymdist1}, as $n \land m \to \infty$ and $m/(n+m) \to \delta \in (0, 1)$, we have
\begin{align*}
    &\sqrt{\frac{nm}{n+m}} \{ \mathrm{PW}_{p, \lambda}(\hat{r}_n, \hat{s}_m)  - \mathrm{PW}_{p, \lambda}(r, s)\}  \\
    &\stackrel{d}{\to} 
    \max_{E \in \Psi_p^{\ast}(r, s_\star)}
    \langle \gamma^\top  D A_\star^\top  (A_\star D A_\star^\top )^{-1}, (\sqrt{\delta} G, \sqrt{1-\delta}H_\star) \rangle,
\end{align*}
where $\gamma \in \mathbb{R}^{N^2}$ and $D \in \mathbb{R}^{N^2 \times N^2}$ are defined in
Proposition \ref{derive_PRW} and $H_\star$ denotes the deletion of the last entry of random vector $H \sim N(0, \Sigma(s))$.
\end{theorem}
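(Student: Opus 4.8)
The plan is to combine the central limit theorem with the delta method of Theorem \ref{delta_method}, applied to the map $(r, s_\star) \mapsto \mathrm{PW}_{p,\lambda}(r, s_\star)$, whose directional Hadamard differentiability and explicit derivative are already supplied by Proposition \ref{derive_PRW}. Exactly as in the proof of Theorem \ref{asymdist1}, I would first invoke the central limit theorem for the multinomial empirical distributions to obtain
\[
\sqrt{\frac{nm}{n+m}}\{(\hat{r}_n, \hat{s}_m) - (r,s)\} \stackrel{d}{\to} (\sqrt{\delta}G, \sqrt{1-\delta}H),
\]
as $n \land m \to \infty$ with $m/(n+m) \to \delta$. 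Since the reparametrization $s \mapsto s_\star$ that deletes the last coordinate is linear and hence continuous, the continuous mapping theorem upgrades this to
\[
\sqrt{\frac{nm}{n+m}}\{(\hat{r}_n, \hat{s}_{m\star}) - (r, s_\star)\} \stackrel{d}{\to} (\sqrt{\delta}G, \sqrt{1-\delta}H_\star),
\]
which is the convergence expressed in the coordinate system in which Proposition \ref{derive_PRW} is stated.

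The delta method requires the Gaussian limit to take its values in the tangent set $\Omega_N \times (\Omega_N)_\star$ to which the derivative map is defined, and I would verify this directly. Because each row of the multinomial covariance $\Sigma(r)$ sums to zero, one has $\Sigma(r)\mathbf{1} = 0$, so $\mathbf{1}^\top G$ has zero variance and $\sum_i G_i = 0$ almost surely; hence $G \in \Omega_N$. Moreover, deleting the last coordinate maps the hyperplane $\Omega_N$ onto all of $\mathbb{R}^{N-1}$, so $(\Omega_N)_\star = \mathbb{R}^{N-1}$ and $H_\star \in (\Omega_N)_\star$ trivially. Thus $(\sqrt{\delta}G, \sqrt{1-\delta}H_\star)$ lies in $\Omega_N \times (\Omega_N)_\star$ almost surely.

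With these preparations, I would apply Theorem \ref{delta_method} to $f = \mathrm{PW}_{p,\lambda}$ at $u = (r, s_\star)$, with $\rho_n = \sqrt{nm/(n+m)}$ and limiting vector $T = (\sqrt{\delta}G, \sqrt{1-\delta}H_\star)$, and insert the derivative from Proposition \ref{derive_PRW}. This yields
\[
\sqrt{\frac{nm}{n+m}}\{\mathrm{PW}_{p,\lambda}(\hat{r}_n, \hat{s}_m) - \mathrm{PW}_{p,\lambda}(r,s)\} \stackrel{d}{\to} \max_{E \in \Psi_p^{\ast}(r, s_\star)} \langle \gamma^\top D A_\star^\top (A_\star D A_\star^\top)^{-1}, (\sqrt{\delta}G, \sqrt{1-\delta}H_\star) \rangle,
\]
which is precisely the claimed limit.

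Since Lemma \ref{difcont} and Proposition \ref{derive_PRW} carry the analytic burden, the theorem itself is a fairly direct delta-method application, and I do not expect a serious obstacle. The one point demanding care is the bookkeeping around the reduced coordinate $s_\star$: one must ensure that the same $(2N-1)$-dimensional parametrization used to prove differentiability is the one in which the Gaussian limit is expressed, and that the deleted-coordinate Gaussian $H_\star$ is genuinely supported on the tangent set $(\Omega_N)_\star$. Once the coordinate systems are aligned and the support condition is verified, the delta method applies verbatim.
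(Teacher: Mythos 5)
Your proposal is correct and takes essentially the same route as the paper: the paper's own proof is the single observation that the theorem follows by applying the delta method (Theorem \ref{delta_method}) with the directional Hadamard derivative supplied by Proposition \ref{derive_PRW}. Your additional bookkeeping---checking $\Sigma(r)\mathbf{1}=0$ so that $G\in\Omega_N$ almost surely, noting $(\Omega_N)_\star=\mathbb{R}^{N-1}$, and aligning the $(r,s_\star)$ coordinates via the continuous mapping theorem---merely makes explicit the details the paper leaves implicit.
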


\begin{proof}
The proof is a simple application of the delta method (Theorem \ref{delta_method}), with the derivative of the regularized PRW distance (Proposition \ref{derive_PRW}).
\end{proof}

\section{Bootstrap} \label{sec:bootstrap}

We consider approximating the derived limit distributions by a bootstrap procedure.
Let $r, s \in \Delta_N$ and $X_1, ..., X_n \sim r, Y_1, ..., Y_m \sim s$ be i.i.d. samples with empirical distributions $\hat{r}_n$ and $\hat{s}_m$. 
Furthermore, let $\hat{r}_\ell^{\ast} $ and $\hat{s}_\ell^{\ast} $ be empirical bootstrap distributions defined using i.i.d. bootstrap samples $X_1^\ast, ..., X_\ell^\ast \sim \hat{r}_n$ and $Y_1^\ast, ..., Y_\ell^\ast \sim \hat{s}_m$.

The functionals $\mathrm{IW}_p$ and $\mathrm{PW}_{p, \lambda}$ are only directionally Hadamard differentiable, that is, they have nonlinear derivatives with respect to $(h_1, h_2)$. As mentioned by \cite{dumbgen1993nondifferentiable} and \cite{sommerfeld2018inference}, the naive $n$-out-$n$ bootstrap is inconsistent for such functionals with a nonlinear Hadamard derivative, but re-sampling fewer than $n$ observations leads to a consistent bootstrap (the rescaled or $m$-out-$n$ bootstrap).
From this fact, we obtain the following results regarding the bootstrap for the IPRW and regularized PRW distances.
In the following, $\text{BL}_1({\mathbb{R})}$ denotes the set of all bounded functions on $\mathbb{R}$ with a Lipschitz constant of at most one. 

\begin{proposition}
\label{bootsw}
Let $p \ge 1$.
We assume that $\ell \to \infty,\ell/n \to \infty$ and $\ell/m \to \infty$ as $n, m \to \infty$. Then, the plug-in bootstrap with $\hat{r}_\ell^\ast$ and $\hat{s}_\ell^\ast$ for the integral projection robust Wasserstein distance is consistent:
\begin{enumerate}
    \item  If $r = s$, and $n \land m \to \infty$ and $m/(n+m) \to \delta \in (0, 1)$, we have:
\begin{align*}
    \sup_{h \in \text{BL}_1(\mathbb{R})} 
    \Biggl| &\mathbb{E} \left[ \left.  h  \left(\left(\frac{\ell}{2}\right)^{\frac{1}{2p}}\mathrm{IW}_p(\hat{r}_\ell^\ast, \hat{s}_\ell^\ast)\right) \right| X_1, ..., X_n, Y_1, ..., Y_m \right]   \\
&-
\mathbb{E}\left[ h\left(\left(\frac{nm}{n+m}\right)^{\frac{1}{2p}}\mathrm{IW}_p(\hat{r}_n, \hat{s}_m)\right)\right] \Biggr|
\to
0,
\end{align*}
in outer probability.
\item If $r \neq s$ and $n \land m \to \infty$ and $m/(n+m) \to \delta \in (0, 1)$, we have:
\begin{align*}
\sup_{h \in \text{BL}_1(\mathbb{R})} \Biggl|&\mathbb{E}\left[ \left. h\left(\sqrt{\frac{\ell}{2}} \{\mathrm{IW}_p(\hat{r}_\ell^\ast, \hat{s}_\ell^\ast) - \mathrm{IW}_p(\hat{r}_n, \hat{s}_m)\}\right) \right| X_1, ..., X_n, Y_1, ..., Y_m \right] \\
&-
\mathbb{E}\left[ h\left(\sqrt{\frac{nm}{n+m}} \{\mathrm{IW}_p(\hat{r}_n, \hat{s}_m) - \mathrm{IW}_p(r,s)\}\right) \right]  \Biggr|
\to
0,
\end{align*}
in outer probability.
\end{enumerate}
\end{proposition}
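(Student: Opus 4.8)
The plan is to combine a conditional central limit theorem for the resampled empirical measures with a bootstrap version of the directional delta method, following the strategy of \cite{dumbgen1993nondifferentiable} and \cite{sommerfeld2018inference}. Because the functionals are only directionally Hadamard differentiable (Proposition \ref{derive_IPRW}), the naive full-size bootstrap is inconsistent; the growth condition on $\ell$ (the resample size being negligible relative to $n \wedge m$) is precisely what makes the rescaled resampling reproduce the correct limit, since it forces the directional derivative to be evaluated at the true pair $(r,s)$ rather than at a data-dependent perturbation. This last point is the structural reason the $m$-out-$n$ bootstrap succeeds where the $n$-out-$n$ bootstrap fails for nonlinear derivatives.

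First I would establish the conditional multinomial CLT: conditionally on $X_1,\dots,X_n,Y_1,\dots,Y_m$ and in outer probability, $\sqrt{\ell/2}\,\{(\hat{r}_\ell^\ast,\hat{s}_\ell^\ast)-(\hat{r}_n,\hat{s}_m)\}$ converges weakly (in the $\mathrm{BL}_1$ sense) to $(\tfrac{1}{\sqrt2}G',\tfrac{1}{\sqrt2}H')$, where $G'\sim N(0,\Sigma(r))$ and $H'\sim N(0,\Sigma(s))$ are independent. This is the standard bootstrap CLT for multinomial counts, using $\hat{r}_n\to r$ and $\hat{s}_m\to s$ to pin down the limiting covariances $\Sigma(r),\Sigma(s)$; the independence of the two resamples and the common resample size $\ell$ produce the balanced $\tfrac1{\sqrt2}$ factors and the combined scaling $\sqrt{\ell/2}$, the bootstrap analogue of $\sqrt{nm/(n+m)}$ with $n=m=\ell$.

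Next I would transfer this through the functional delta method in its directionally differentiable, bootstrap form. For part (ii), applying it to the map $(r,s)\mapsto\mathrm{IW}_p(r,s)$ at $(r,s)$, with derivative as in Theorem \ref{asymdist1}(ii), yields that conditionally and in outer probability $\sqrt{\ell/2}\{\mathrm{IW}_p(\hat{r}_\ell^\ast,\hat{s}_\ell^\ast)-\mathrm{IW}_p(\hat{r}_n,\hat{s}_m)\}$ converges to $\tfrac1p\mathrm{IW}_p^{1-p}(r,s)\int_{S_{d,k}}\max_{(u,v)\in\Phi_p^\ast(r,s;\mathcal{X}_E)}[\langle u,\tfrac1{\sqrt2}G'\rangle+\langle v,\tfrac1{\sqrt2}H'\rangle]\,d\mu(E)$, which I would identify with the unconditional limit of Theorem \ref{asymdist1}(ii) via the coincidence of the conditional Gaussian law with $(\sqrt\delta G,\sqrt{1-\delta}H)$. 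For part (i) the centering needs extra care, since the bootstrap is not recentered yet $\mathrm{IW}_p(\hat{r}_n,\hat{s}_m)\neq0$ in finite samples even when $r=s$: I would work with $\mathrm{IW}_p^p$ and split $\sqrt{\ell/2}\,\mathrm{IW}_p^p(\hat{r}_\ell^\ast,\hat{s}_\ell^\ast)=\sqrt{\ell/2}\{\mathrm{IW}_p^p(\hat{r}_\ell^\ast,\hat{s}_\ell^\ast)-\mathrm{IW}_p^p(\hat{r}_n,\hat{s}_m)\}+\sqrt{\ell/2}\,\mathrm{IW}_p^p(\hat{r}_n,\hat{s}_m)$, showing the last term is $o_P(1)$ from $\mathrm{IW}_p^p(\hat{r}_n,\hat{s}_m)=O_P((n\wedge m)^{-1/2})$ (Theorem \ref{asymdist1}(i)) together with $\ell=o(n\wedge m)$. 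The delta method applied to the first term gives the conditional limit $\int_{S_{d,k}}\max_{u\in\Phi_p^\ast(\mathcal{X}_E)}\langle u,\tfrac1{\sqrt2}(G'-H')\rangle\,d\mu(E)$, where I use that at $r=s$ one has $\Phi_p^\ast(r,s;\mathcal{X}_E)=\{(u,-u):u\in\Phi_p^\ast(\mathcal{X}_E)\}$; since $\tfrac1{\sqrt2}(G'-H')\sim N(0,\Sigma(r))\stackrel{d}{=}G$, this matches Theorem \ref{asymdist1}(i) after the continuous map $t\mapsto t^{1/p}$. Note that in part (i) the limit is free of $\delta$, so the balanced resampling is harmless.

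The hard part will be the directionally differentiable delta method in the bootstrap/conditional mode, made uniform over $\mathrm{BL}_1(\mathbb{R})$ and valid in outer probability. Unlike the classical bootstrap delta method, here $f'_{(r,s)}$ is nonlinear, so I cannot simply compose with a fixed continuous linear map; instead I must combine the local Lipschitz property and positive homogeneity of the directional derivative with the negligibility of the resample size to guarantee that the random base point $(\hat{r}_n,\hat{s}_m)$ does not shift the derivative away from $(r,s)$. A secondary point requiring attention is ensuring the conditional Gaussian limit carries the correct weighting so as to coincide with $(\sqrt\delta G,\sqrt{1-\delta}H)$; this is automatic in part (i) thanks to the $\delta$-cancellation, while in part (ii) it is governed by the resampling scheme.
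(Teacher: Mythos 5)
Your route is the same as the paper's, just unpacked. The paper's entire proof consists of citing Proposition \ref{derive_IPRW} for directional Hadamard differentiability and then invoking Proposition 2 of \cite{dumbgen1993nondifferentiable} as a black box; what you label ``the hard part'' --- a conditional delta method for directionally Hadamard differentiable maps, uniform over $\mathrm{BL}_1(\mathbb{R})$ and valid in outer probability when the resample size is $o(n\wedge m)$ --- is exactly that cited proposition, so you may cite it rather than re-derive it. (You also silently corrected the typo in the statement: the condition must be $\ell/n\to 0$, $\ell/m\to 0$, as the paper itself uses in its applications.) Beyond this, your sketch supplies one step that the paper's one-line proof glosses over and that is genuinely needed for part (i): D\"umbgen's result concerns the \emph{recentered} bootstrap statistic, while part (i) is stated for the uncentered $(\ell/2)^{1/(2p)}\mathrm{IW}_p(\hat r^\ast_\ell,\hat s^\ast_\ell)$. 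Your decomposition, with $\sqrt{\ell/2}\,\mathrm{IW}_p^p(\hat r_n,\hat s_m)=o_P(1)$ following from Theorem \ref{asymdist1}(i) and $\ell=o(n\wedge m)$, together with the identification $\tfrac{1}{\sqrt2}(G'-H')\stackrel{d}{=}G$ at $r=s$, is correct and closes that gap.

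However, the issue you defer as ``secondary'' in part (ii) is a genuine gap --- in your argument and equally in the paper's. Balanced resampling ($\ell$ draws from each of $\hat r_n$ and $\hat s_m$) with scaling $\sqrt{\ell/2}$ has conditional Gaussian limit $(\tfrac{1}{\sqrt2}G,\tfrac{1}{\sqrt2}H)$, so the bootstrap law converges to the law of the directional derivative evaluated at $(\tfrac{1}{\sqrt2}G,\tfrac{1}{\sqrt2}H)$, whereas the target law in Theorem \ref{asymdist1}(ii) is the derivative evaluated at $(\sqrt{\delta}G,\sqrt{1-\delta}H)$. These laws need not coincide when $\delta\neq\tfrac12$: already for $N=2$, where the derivative at $r\neq s$ is linear, the two limits are centered Gaussians with variances proportional to $\tfrac12\{r_1(1-r_1)+s_1(1-s_1)\}$ and $\delta r_1(1-r_1)+(1-\delta)s_1(1-s_1)$, which differ unless $\delta=\tfrac12$ or $r_1(1-r_1)=s_1(1-s_1)$. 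Hence part (ii) as stated is justified only for $\delta=\tfrac12$ (i.e., $n/m\to1$); in general the resampling must be made proportional --- draw $\ell_1$ from $\hat r_n$ and $\ell_2$ from $\hat s_m$ with $\ell_2/(\ell_1+\ell_2)\to\delta$ and scale by $\sqrt{\ell_1\ell_2/(\ell_1+\ell_2)}$ --- after which your argument (or the citation to \cite{dumbgen1993nondifferentiable}) goes through verbatim. No fix is needed in part (i), as you correctly observe, because the limit there is free of $\delta$. Identifying this mismatch, which the paper's proof does not address, is the most valuable observation in your proposal; it should be promoted from a closing remark to an explicit correction.
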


\begin{proof}
As shown in Proposition \ref{derive_IPRW}, the map $(r, s) \mapsto \mathrm{IW}_p(r, s)$ is 
directionally Hadamard differentiable. 
Then,
the proof is a direct application of Proposition 2 in \cite{dumbgen1993nondifferentiable} with this map.
\end{proof}

\begin{proposition}
\label{bootprw}
Let $p \ge 1$ be even and $\lambda > 0$.
We assume that $\ell \to \infty, \ell/n \to \infty$ and $\ell/m \to \infty$ as $n, m \to \infty$. Then, the plug-in bootstrap with $\hat{r}_\ell^\ast$ and $\hat{s}_\ell^\ast$ for the regularized projection robust Wasserstein distance is consistent. That is, as $n \land m \to \infty$ and $m/(n+m) \to \delta \in (0, 1)$, we have
\begin{align*}
\sup_{h \in \text{BL}_1(\mathbb{R})} \Biggl| &\mathbb{E}\left[ \left. h\left(\sqrt{\frac{\ell}{2}} \{\mathrm{PW}_{p, \lambda}(\hat{r}_\ell^\ast, \hat{s}_\ell^\ast) - \mathrm{PW}_{p, \lambda}(\hat{r}_n, \hat{s}_m)\}\right) \right| X_1, ..., X_n, Y_1, ..., Y_m \right] \\
&-
\mathbb{E}\left[ h\left(\sqrt{\frac{nm}{n+m}} \{\mathrm{PW}_{p, \lambda}(\hat{r}_n, \hat{s}_m) - \mathrm{PW}_{p, \lambda}(r,s)\}\right) \right]  \Biggr|
{\to}
0,    
\end{align*}
in outer probability.
\end{proposition}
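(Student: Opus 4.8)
The plan is to mirror the proof of Proposition \ref{bootsw} for the IPRW distance, since the only ingredient distinguishing the two cases is the differentiability of the underlying functional, which has already been supplied. The strategy is a bootstrap analogue of the delta method: having established directional Hadamard differentiability of the map $(r, s_\star) \mapsto \mathrm{PW}_{p, \lambda}(r, s_\star)$, one invokes the general bootstrap-consistency result for directionally differentiable functionals, namely Proposition 2 in \cite{dumbgen1993nondifferentiable}. The bulk of the genuine work has already been carried out in Proposition \ref{derive_PRW} via sensitivity analysis, so this proof should reduce to checking that the hypotheses of Dümbgen's proposition are met.

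First I would record the differentiability input. By Proposition \ref{derive_PRW}, the map $(r, s_\star) \mapsto \mathrm{PW}_{p, \lambda}(r, s_\star)$ is directionally Hadamard differentiable at every $(r, s_\star) \in \Delta_N \times (\Delta_N)_\star$ tangentially to $\Omega_N \times (\Omega_N)_\star$, with the generally nonlinear derivative (\ref{deriv}). This is precisely the structural hypothesis needed for the bootstrap delta method. Next I would verify the two probabilistic inputs. The multinomial central limit theorem gives $\sqrt{nm/(n+m)}\{(\hat{r}_n, \hat{s}_m) - (r, s)\} \stackrel{d}{\to} (\sqrt{\delta}G, \sqrt{1-\delta}H)$ exactly as in the proof of Theorem \ref{asymdist1}, and the limit takes values in $\Omega_N \times (\Omega_N)_\star$ since each coordinate block sums to zero. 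Conditionally on the data, the multinomial bootstrap central limit theorem gives that $\sqrt{\ell/2}\{(\hat{r}_\ell^\ast, \hat{s}_\ell^\ast) - (\hat{r}_n, \hat{s}_m)\}$ converges in distribution to the same Gaussian limit in outer probability; the scaling $\sqrt{\ell/2}$ reflects that both bootstrap samples share the common size $\ell$, so the effective rate is $\sqrt{\ell \cdot \ell/(\ell+\ell)} = \sqrt{\ell/2}$. The increments $\hat{r}_\ell^\ast - \hat{r}_n$ lie in $\Omega_N$ and their deletions in $(\Omega_N)_\star$, so the bootstrap process also stays in the tangent set.

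I would then combine these ingredients through the rescaling condition. The assumptions $\ell/n \to \infty$ and $\ell/m \to \infty$ are the decisive ones: they force the bootstrap to converge at a strictly slower rate than the empirical process, which is exactly what repairs the failure of the naive $n$-out-$n$ bootstrap when the derivative (\ref{deriv}) is \emph{nonlinear} in $(h_1, h_{2\star})$. With the differentiability from Proposition \ref{derive_PRW} and the two central limit theorems in place, Proposition 2 of \cite{dumbgen1993nondifferentiable} yields convergence of the conditional law to the limit law of $\sqrt{nm/(n+m)}\{\mathrm{PW}_{p, \lambda}(\hat{r}_n, \hat{s}_m) - \mathrm{PW}_{p, \lambda}(r, s)\}$ from Theorem \ref{limitdist2}, uniformly over $h \in \text{BL}_1(\mathbb{R})$, which is the assertion.

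The main obstacle is conceptual rather than computational, and it has effectively been dispatched in advance. Because the Hadamard derivative (\ref{deriv}) is a maximum over the active set $\Psi_p^\ast(r, s_\star)$ and is therefore not linear, a standard linear bootstrap delta method is unavailable and the naive bootstrap is inconsistent. The delicate point is to confirm that rescaling at rate $\sqrt{\ell/2}$ with $\ell/(n \wedge m) \to \infty$ places the bootstrap increments tangentially, so that Dümbgen's theorem applies. Since the empirical and bootstrap increments automatically lie in the tangent sets $\Omega_N$ and $(\Omega_N)_\star$, this verification is immediate, and what remains is the routine bookkeeping already performed for the IPRW case in Proposition \ref{bootsw}.
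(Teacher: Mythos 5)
Your proposal is correct and takes essentially the same route as the paper, whose entire proof is simply to cite Proposition \ref{derive_PRW} for the directional Hadamard differentiability of $(r, s_\star) \mapsto \mathrm{PW}_{p, \lambda}(r, s_\star)$ and then apply Proposition 2 of D\"umbgen (1993); your additional verification of the multinomial CLTs and tangent-set membership is just the routine hypothesis-checking that the paper leaves implicit. One flag: the condition $\ell/n \to \infty$, $\ell/m \to \infty$ in the statement is a typo for $\ell/n \to 0$, $\ell/m \to 0$ (the paper itself invokes the latter in its applications section, and the rescaled-bootstrap result requires resampling \emph{fewer} observations), so your gloss that this ``forces the bootstrap to converge at a strictly slower rate'' describes the consequence of the correct condition $\ell/n \to 0$, not of the condition you quoted.
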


\begin{proof}
As shown in Proposition \ref{derive_PRW}, the map $(r, s) \mapsto \mathrm{PW}_{p, \lambda}(r, s)$ is 
directionally Hadamard differentiable. 
Then,
the proof is a direct application of Proposition 2 in \cite{dumbgen1993nondifferentiable} with this map.
\end{proof}

In practice, the performance of our bootstrap procedure depends on the choice of replacement number $\ell$. In Section 5, we investigate how the choice of $\ell$ affects the finite-sample performance of bootstrapping using simulation studies.

\section{Simulation studies}
We illustrate our distributional limit results in Monte Carlo simulations. Specifically, we investigate the speed of convergence for the empirical IPRW distance ($p=1$) and the empirical regularized PRW distance ($p=2$) to their limit distributions (Theorems \ref{asymdist1} and \ref{limitdist2}).
We also illustrate the accuracy of the approximation using the rescaled bootstrap (Propositions \ref{bootsw} and \ref{bootprw}). 
All simulations were performed using \textsf{R} (\cite{team2013r}). The Wasserstein distances were calculated using the \textsf{R}~package \textit{transport} \citep{r_transport}, and the regularized transport distances were calculated using the \textsf{R}~package \textit{Barycenter} \citep{r_barycenter}.

\subsection{Speed of convergence to the distributional limit}

\subsubsection{Integral projection robust Wasserstein distance}
We consider finite ground space $\mathcal{X}$ to be an equidistant two-dimensional $L \times L$ grid on $[0, 1] \times [0, 1]$, with size $N = L^2$. 
We first set the grid size to $L=7$ (i.e., $N=49$).

For case $r=s$, we consider a probability distribution $r$ on $\mathcal{X}$ as the realization of a Dirichlet random variable Dir(\textbf{1}) with concentration parameter $\mathbf{1} = (1, ..., 1) \in \mathbb{R}^{N}$ and set $s=r$. Given such distributions $r, s \in \Delta_N$, we sample observations $X_1, ..., X_n \sim r$ and $Y_1, ..., Y_m \sim s$ i.i.d. with sample size $n = m \in \{25, 50, 100, 1000, 5000\}$ and compute $\sqrt{\frac{n}{2}}\mathrm{IW}_1(\hat{r}_n, \hat{s}_n)$ with one-dimensional projection and the uniform measure, which corresponds to the sliced Wasserstein distance.
This process is repeated 20,000 times.
Similarly, we consider the same setup for $r \neq s$, where we generate a second distribution, $s \sim$ Dir(\textbf{1}), independently. 
We then compare the finite distributions with the theoretical limit distributions given by Theorem \ref{asymdist1}.

We demonstrate the results using kernel density estimators and the corresponding Q-Q plots in Figure \ref{sw_plots} (A) and (B). The limit distributions are good approximations of the finite sample distributions for a large sample size $(n=1000)$ in both cases $r=s$ and $r \neq s$.
We also observe that, under $r = s$, the limit law approximates the finite sample distribution quite well, even for a small sample size $(n=50)$.
In Figure \ref{sw_ks}, 
we also show the speed of convergence with respect to the Kolmogorov--Smirnov distance (maximum absolute difference between the distribution function of the finite sample law and that of the limit law) for grid sizes $L=3, 5, 7$.
This shows that the Kolmogorov--Smirnov distances decrease as the sample size increases, and the size of ground space $N=L^2$ slows the speed of convergence marginally, especially for $r \neq s$.

\begin{figure}[htbp]
  \begin{minipage}[b]{0.9\linewidth}
    \centering
    \includegraphics[keepaspectratio, width=110mm]{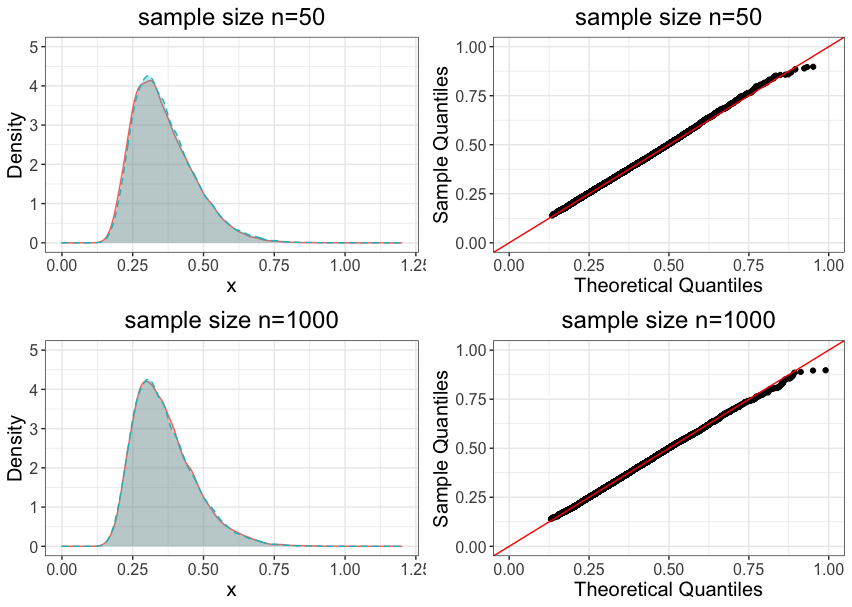}
    \subcaption{ $r=s$}
  \end{minipage} \\
  \begin{minipage}[b]{0.9\linewidth}
    \centering
    \includegraphics[keepaspectratio, width=110mm]{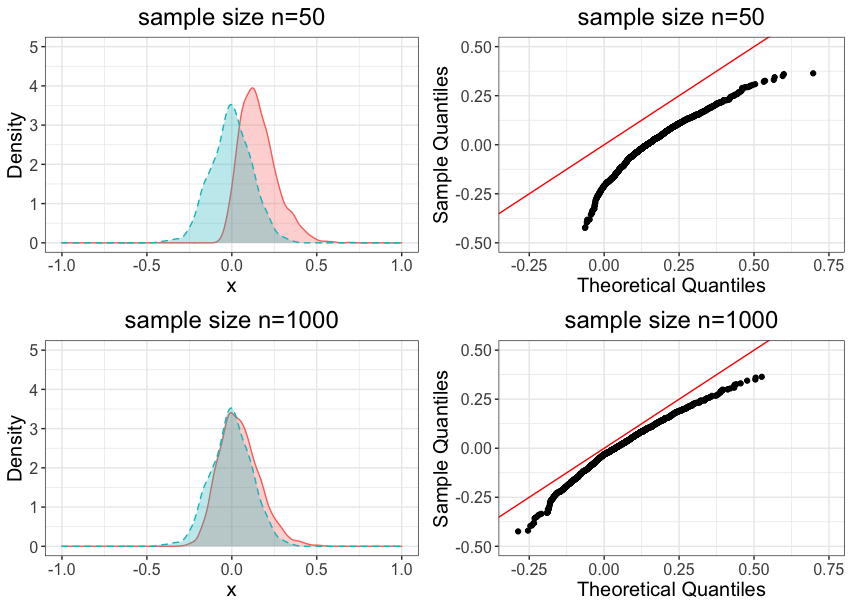}
    \subcaption{ $r \neq s$}
  \end{minipage}
      \caption{ \small{
      \textbf{(A) Comparison of the finite sample distributions and the limit distribution of the empirical IPRW distance for the case $\mathbf{r=s}$.} First row shows finite sample density (dashed line) of the empirical IPRW distance for $n=50$ on a regular grid of size $L=7$ compared to its limit density (solid line). The densities are estimated by kernel density estimators with Gaussian kernel and Silverman's rule is used  to select bandwidth. The corresponding Q-Q plot is presented on the right, where the red solid line indicates perfect fit. Second row is the same setting as above, but $n=1000$. 
      \textbf{(B) Comparison of the finite sample distributions and the limit distribution of the empirical IPRW distance for the case $\mathbf{r \neq s}$.} Same scenario as in (A), but here the sampling distributions $r$ and $s$ are not equal.}}
  \label{sw_plots}
\end{figure}

\begin{figure}[htbp]
  \begin{minipage}[b]{0.45\linewidth}
    \centering
    \includegraphics[keepaspectratio, width=70mm]{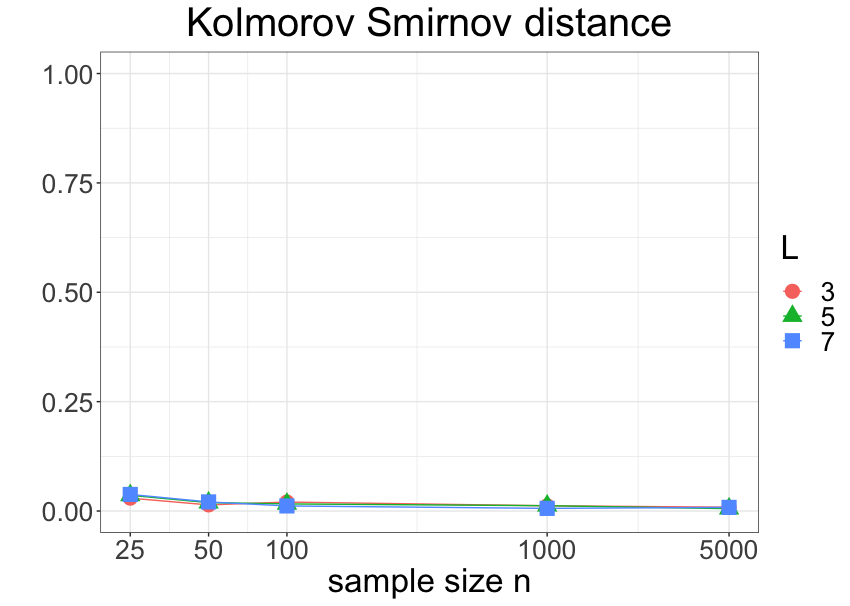}
    \subcaption{ $r=s$}
  \end{minipage}
  \begin{minipage}[b]{0.45\linewidth}
    \centering
    \includegraphics[keepaspectratio, width=70mm]{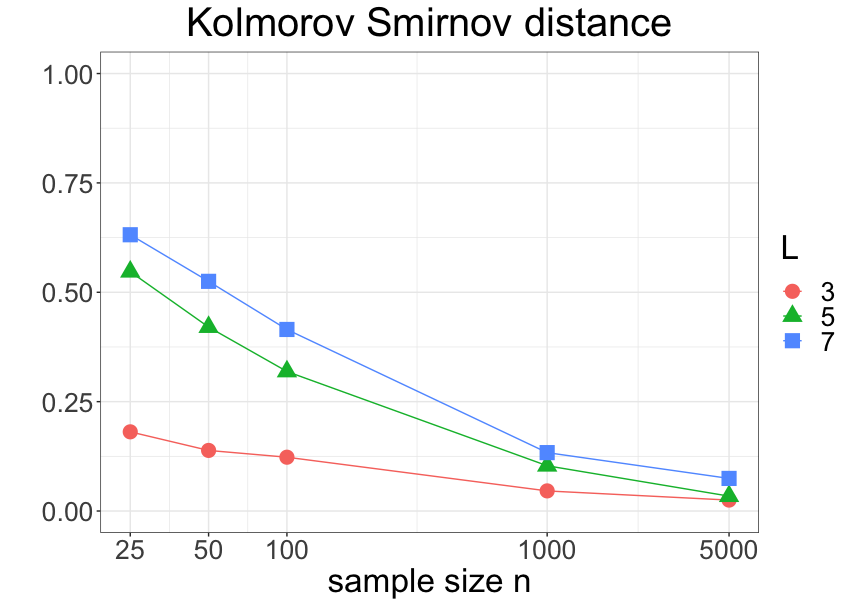}
    \subcaption{ $r \neq s$}
  \end{minipage}
      \caption{\small{
       \textbf{(A) Kolmogorov-Smirnov distances of the IPRW distance for the case $\mathbf{r=s}$.} The Kolmogorov-Smirnov distance between the finite sample distributions of the empirical IPRW distance and its theoretical limit distribution for different sample size $n \in \{25, 50, 100, 1000, 5000\}$ and different grid sizes $L$. The axes are given on a logarithmic scale.
       \textbf{
       (B) Kolmogorov-Smirnov distances of the IPRW distance for the case $\mathbf{r \neq s}$.} Same scenario as in (A), but here the sampling distributions $r$ and $s$ are not equal.
      }}
      \label{sw_ks}
\end{figure}

\subsubsection{Regularized projection robust Wasserstein distance}
We consider ground space $\mathcal{X}$ to be of form $\{1/M, 2/M, ..., M/M\} \times \{-0.001, 0.001 \} \times\{-0.001, 0.001\} \subset \mathbb{R}^3$ with grid size $M$ and total size $N=4M$. This ground space $\mathcal{X}$ is set to have a low-dimensional structure: two distributions on $\mathcal{X}$ differ mostly in the first coordinate, while the differences in the second and third coordinates are regarded as noise. For $M=10$ (i.e., $N=40$), we generated probability distributions $r$ and $s$ on $\mathcal{X}$ as realizations of independent Dirichlet random variables Dir(\textbf{1}). Given distributions $r \neq s$, we consider the same sampling scenarios as in the case of the IPRW distance and compute $\sqrt{\frac{n}{2}} \{ \mathrm{PW}_{2, \lambda}(\hat{r}_n, \hat{s}_n) - \mathrm{PW}_{2, \lambda}(r, s) \}$ with one-dimensional projection and regularization parameter $\lambda = 1$. We repeat this process 20,000 times and compare the finite distribution to its theoretical limit distribution given by Theorem \ref{limitdist2}.

Figure \ref{prw_density} shows the results demonstrated by the kernel density estimators and corresponding Q-Q plots. The limit distributions are good approximations of the finite sample distributions for both small and large sample sizes. Figure \ref{prw_ks} shows the speed of convergence with respect to the Kolmogorov-Smirnov distance under grid sizes $M=3, 7, 10$. We observe a declining tendency of the Kolmogorov-Smirnov distances as the sample size increases.

\begin{figure}[htbp]
    \centering
    \includegraphics[keepaspectratio, width=110mm]{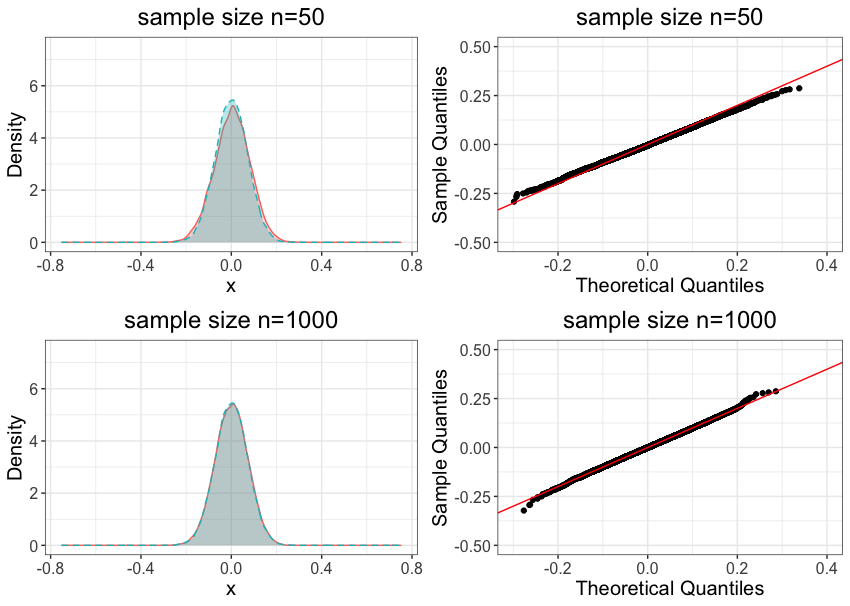}
    \caption{\small{\textbf{Comparison of the finite sample distributions and the limit distribution of the empirical  regularized PRW distance for the case $\mathbf{r\neq s}$}.  First row shows finite sample density (dashed line) of the empirical regularized PRW distance for $n=50$ on a ground space of grid size $M=10$ compared to its limit density (solid line). The densities are estimated in the same way as Figure \ref{sw_plots}. The corresponding Q-Q plot is presented on the right, where the red solid line indicates perfect fit. Second row is the same setting as above, but $n=1000$. }}
    \label{prw_density}
\end{figure}

\begin{figure}[htbp]
    \centering
    \includegraphics[keepaspectratio, width=70mm]{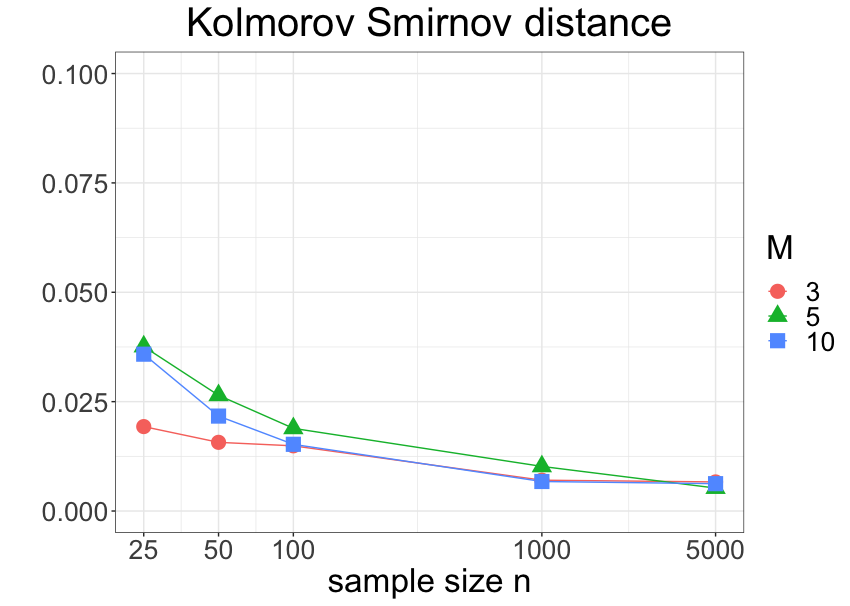}
    \caption{\small{\textbf{Kolmogorov-Smirnov distance of the regularized PRW for $\mathbf{r \neq s}$.} The Kolmogorov-Smirnov distance between the finite sample distribution of the empirical regularized PRW distance and its theoretical limit distribution for different sample size $n \in \{25, 50, 100, 1000, 5000\}$ and different grid sizes $M$. The axes are given on a logarithmic scale.}}
    \label{prw_ks}
\end{figure}

\subsection{Simulation of bootstrap}

\subsubsection{Integral projection robust Wasserstein distance}
We simulate the rescaled plug-in bootstrap approximations from Section \ref{sec:bootstrap} for the IPRW distance. For a grid with $L=7$, we generate $r \sim \text{Dir(\textbf{1})}$, set $s=r$, and sample $n=1000$ observations according to probability distributions $r, s$. In addition, for fixed empirical distributions $\hat{r}_n, \hat{s}_n$, we generate $B=500$ bootstrap replications of
$
\sqrt{\frac{\ell}{2}} \mathrm{IW}_1(\hat{r}_\ell^\ast, \hat{s}_\ell^\ast)
$
by drawing independently with replacement $ \ell \in \{n, n^{4/5}, n^{2/3}, n^{1/2} \}$ according to $\hat{r}_n$ and $\hat{s}_n$.
Similarly, we consider the same setup in the case of $r \neq s$, where the second distribution, $s$, is generated independently from $\text{Dir(\textbf{1})}$. In the $r \neq s$ case, the form of bootstrap replications is
$
\sqrt{\frac{\ell}{2}} 
\{\mathrm{IW}_1(\hat{r}_\ell^\ast, \hat{s}_\ell^\ast) - \mathrm{IW}_1(\hat{r}_n, \hat{s}_n)\}.
$
The finite bootstrap sample distributions are then compared with their finite sample and theoretical limit distributions.

\begin{figure}[htbp]
  \begin{minipage}[b]{0.9\linewidth}
    \centering
    \includegraphics[keepaspectratio, width=110mm]{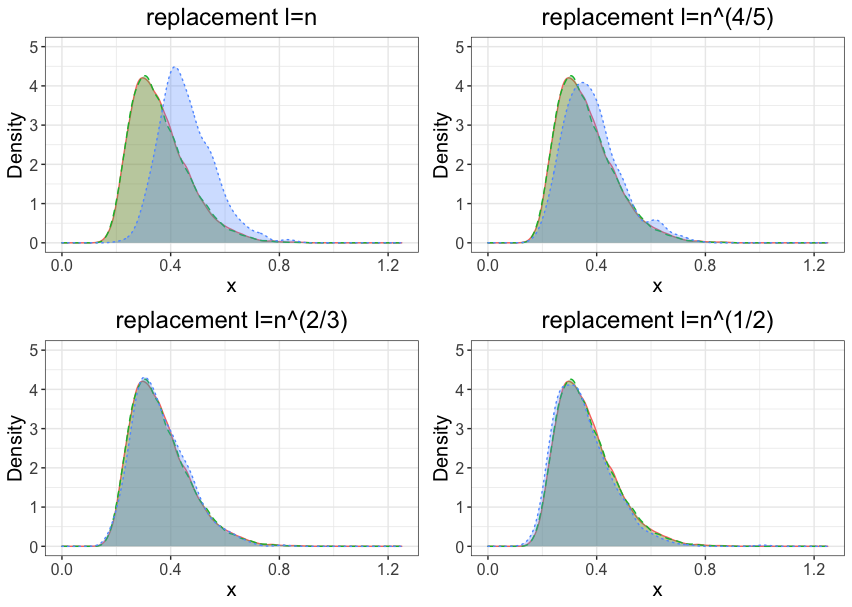}
    \subcaption{ $r=s$}
  \end{minipage} \\
  \begin{minipage}[b]{0.9\linewidth}
    \centering
    \includegraphics[keepaspectratio, width=110mm]{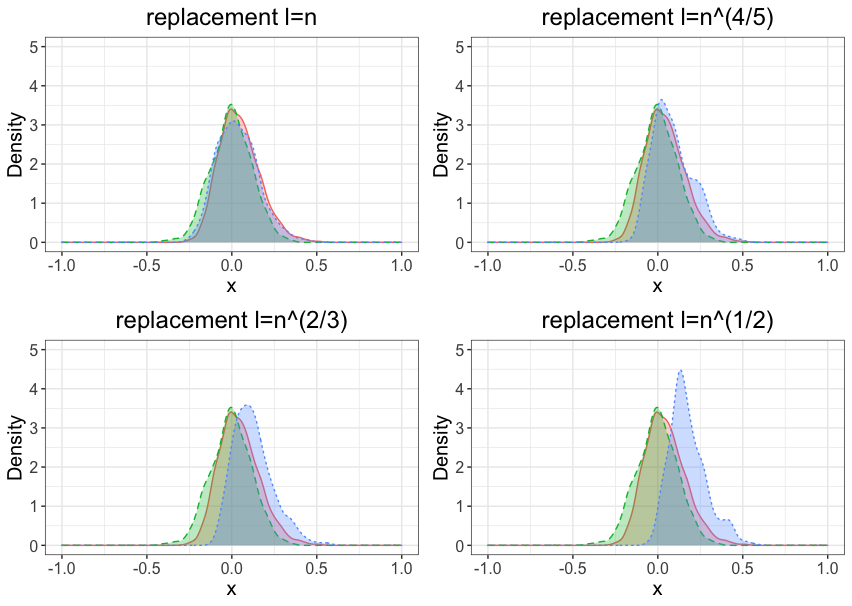}
    \subcaption{ $r \neq s$}
  \end{minipage}
      \caption{\small{\textbf{(A) Bootstrap for the empirical IPRW distance under $\mathbf{r = s}$.} Illustration of the rescaled plug-in bootstrap approximation $(n=1000)$ with replacement $\ell \in \{n, n^{4/5}, n^{2/3}, n^{1/2}\}$ and grid size $L=7$. Finite bootstrap densities (dotted lines) are compared to their finite sample density (solid line) and limit density (dashed line). 
      The densities are estimated in the same way with Figure \ref{sw_plots}.
      \textbf{(B) Bootstrap for the empirical IPRW distance under $\mathbf{r \neq s}$.} Same scenario as in (A), but here the sampling distributions $r$ and $s$ are not equal.}}
      \label{boot_sw}
\end{figure}

The results are shown in Figure \ref{boot_sw}. We observe that, under $r=s$, finite bootstrap distributions with fewer replacements $(\ell=n^{4/5}, n^{2/3}, n^{1/2})$ are better approximations of the finite sample distribution than the naive bootstrap ($\ell=n$). This is consistent with the theoretical result in Section 4, which claims the naive bootstrap does not have consistency for the IPRW distance but resampling fewer observations leads to consistency. However, under $r \neq s$, the bootstrap approximations with fewer replacements are not good, and the naive bootstrap approximation is better. 
This good approximation by the naive bootstrap is possible due to the fact that the map $(r, s) \mapsto \mathrm{IW}_p(r, s)$ is only directionally Hadamard differentiable in general but (non-directionally) Hadamard differentiable at most points $(r, s)$ with $r \neq s$. For instance, for ground size $N=2$ (i.e., $\mathcal{X} = \{x_1, x_2\}$), the IPRW distance can be explicitly written as $\mathrm{IW}_p(r, s) = (\int_{S_{d, k}} \|E^\top(x_1 - x_2)\|^p d\mu(E))^{1/p}|r_1 - s_1|$. Therefore, in this case, the map $(r, s) \mapsto\mathrm{IW}_p(r, s)$ is Hadamard differentiable if $r \neq s$.

\subsubsection{Regularized projection robust Wasserstein distance}
For grid size $M=10$, we generate distributions $r$ and $s$ as realizations of independent random variables from Dir($\mathbf{1}$) and sample $n=1000$ observations according to probability distributions $r, s$. Additionally, for fixed empirical distributions $\hat{r}_n, \hat{s}_n$, we generate $B=500$ bootstrap replications of
$
\sqrt{\frac{\ell}{2}} \{\mathrm{PW}_{2, \lambda}(\hat{r}_\ell^\ast, \hat{s}_\ell^\ast) -\mathrm{PW}_{2, \lambda}(\hat{r}_n, \hat{s}_n) \}
$
with $\lambda = 1$
by drawing independently with replacement $ \ell \in \{n, n^{4/5}, n^{2/3}, n^{1/2} \}$, according to $\hat{r}_n$ and $\hat{s}_n$. 
The finite bootstrap sample distributions are then compared with their finite sample and theoretical limit distributions. 

The results are shown in Figure \ref{boot_prw}. The accuracy of the bootstrap approximation is not affected by replacement number $\ell$ in this case.

\begin{figure}[htbp]
    \centering
    \includegraphics[keepaspectratio, width=110mm]{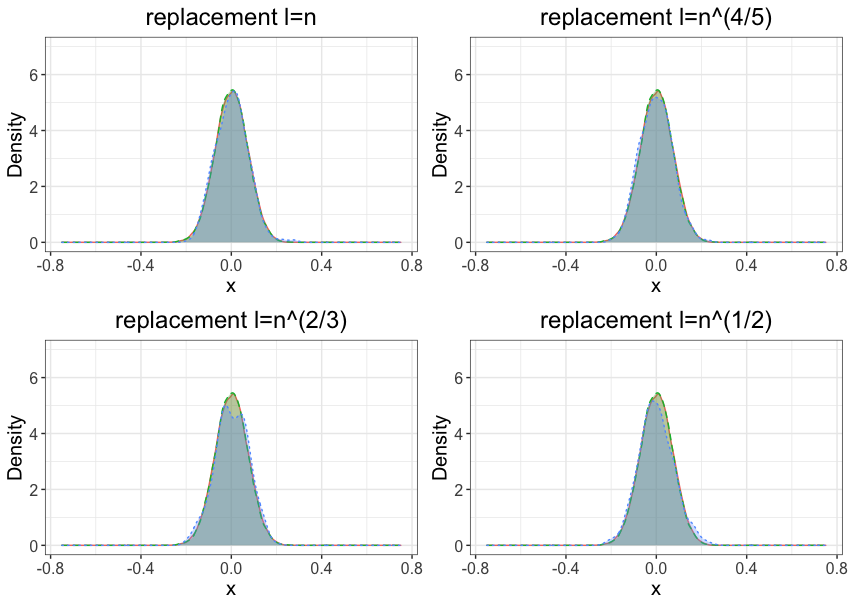}
    \caption{\small{\textbf{Bootstrap for the regularized PRW distance under $\mathbf{r \neq s}$}.  Illustration of the rescaled plug-in bootstrap approximation $(n=1000)$ with the replacement $\ell \in \{n, n^{4/5}, n^{2/3}, n^{1/2}\}$ and grid size  $M=10$. Finite bootstrap densities (dotted lines) are compared with their finite sample density (solid line) and limit density (dashed line). 
      The densities are estimated in the same way with Figure \ref{sw_plots}.}}
    \label{boot_prw}
\end{figure}

\section{Applications}
\subsection{Two-sample testing with sliced Wasserstein distance}
Let $r, s \in \Delta_N$ and take $X_1, ..., X_n \sim r$, $Y_1, ..., Y_m \sim s$ be i.i.d. samples. The nonparametric two-sample testing is a problem of detecting whether sampling distributions $r, s$ are equal, based on samples. This is described as
\[
H_0 : r = s \quad \text{vs}. \quad H_1: r \neq s.
\]
Based on the previous distributional results, we propose a test using the sliced Wasserstein distance, that is, the IPRW distance with one-dimensional projection and uniform measure.
Specifically, we denote $\mathrm{SW}_{m, n} = \sqrt{\frac{mn}{m+n}} \mathrm{IW}_p(\hat{r}_n, \hat{s}_m)$ and propose a test
\[
\mathrm{SW}_{m, n} > c_{\alpha} \Rightarrow
\,\,
\text{reject}
\,\,
H_0,
\]
where $c_\alpha$ is a critical value chosen according to the given level of $\alpha \in (0, 1)$. 
The two-sample testing based on the Wasserstein distance was performed by \cite{ramdas2017wasserstein}. They designed univariate test statistics using the Wasserstein distance and analyzed their limit distribution. However, their approach is only available for the $d=1$ case, as it does not extend to higher dimensions. Our proposed test is not restricted to a one-dimensional setting, being applicable to large-scale datasets because of the low computational complexity of the sliced Wasserstein distance.

We use the bootstrap procedure to choose an appropriate critical value from data. Let $\hat{r}_\ell^{\ast}$ and $\hat{s}_\ell^{\ast}$ be the empirical bootstrap distributions obtained from bootstrap samples $X_1^\ast, ..., X_\ell^\ast \sim \hat{r}_n$ and $Y_1^\ast, ..., Y_\ell^\ast \sim \hat{s}_m$, respectively. We define the bootstrap version of the test statistics as:
$
\mathrm{SW}_{m, n}^\ast = \sqrt{\frac{\ell}{2}} \mathrm{IW}_p(\hat{r}_\ell^\ast, \hat{s}_\ell^\ast)
$
and denote by $\hat{c}_\alpha$ the $(1-\alpha)$quantile of $\mathrm{SW}_{m,n}^{\ast}$. Note that $\hat{c}_\alpha$ can be computed numerically.
Then, the validity of the rescaled bootstrap for the IPRW distance (Proposition \ref{bootsw}) implies that, under  $\ell \to \infty, \ell/n \to 0$, and $\ell/m \to 0$ as $n,m \to \infty$, the test
\[
\mathrm{SW}_{m, n} > \hat{c}_{\alpha} \Rightarrow
\,\,
\text{reject}
\,\,
H_0
\]
has asymptotic level $\alpha$. Specifically, $\limsup_{m, n \to \infty} P(\mathrm{SW}_{m, n} > \hat{c}_{\alpha}) \le \alpha$.

We here illustrate the finite sample performance of this test. We set the finite ground space $\mathcal{X}$ to be an equidistant two-dimensional $7 \times 7$ grid on $[0, 1] \times [0, 1]$. For the case $r = s$, we generate a distribution $r \sim \text{Dir}(\textbf{1})$ and set $s=r$, while for the case $r \neq s$, we generate two distributions $r, s \sim \text{Dir}(\textbf{1})$ independently.
We set the sample size as $n=m=1000$ and vary the replacement number as $\ell \in \{n^{4/5}, n^{2/3}, n^{1/2}\}$. We set the significance level to be $\alpha = 0.05$ and run $1000$ Monte Carlo iterations in each case.

Table \ref{tab:test_power} shows the rejection rates of the proposed test in each case. For the case $r = s$, the rejection rates should be under the significance level $\alpha = 0.05$, and this is true for all $\ell \in \{n^{4/5}, n^{2/3}, n^{1/2}\}$. For the case $r \neq s$, the power of the test is $1.000$, which is satisfactory.

\begin{table}[]
    \centering
    \begin{tabular}{ccc}
         &$r=s$  & $r \neq s$  \\ \midrule
        $\ell = n^{4/5}$ &0.001 &1.000 \\
        $\ell = n^{2/3}$ &0.016 &1.000 \\
        $\ell = n^{1/2}$ &0.037 &1.000 \\
    \end{tabular}
    \caption{\small{Rejection rates of the proposed test. The significance level is $0.05$.}}
    \label{tab:test_power}
\end{table}

We now apply the proposed test to testing the equality of color distributions in images. Given two different images, the aim is to investigate whether the images have significantly different color distributions. Figure \ref{datasets} shows the datasets of images used. 
Each image has $768 \times 576 = 442368$ pixels.
We obtained these images from a publicly available dataset \url{http://tabby.vision.mcgill.ca/html/welcome.html}.
We transform each image into a color histogram in the RGB color space with grid size $16^3 = 4086$. In the dataset 1 (the first column in Figure \ref{datasets}), the two images are expected to have different color distributions.
In the dataset 2 (the second column in Figure \ref{datasets}), the two images are expected to have different but similar color distributions.
In the dataset 3 (the third row in Figure \ref{datasets}), one image is obtained by turning the other image from side to side; thus, they have the same color histograms.
In each dataset, we randomly select $n = 10,000$ pixels from each image and construct empirical color distributions $\hat{r}_n, \hat{s}_n$. We then
calculate the test statistics $\mathrm{SW}_{n,n}$ and $p$-values based on $B=500$ bootstrap with replacement $\ell \in \{n^{4/5}, n^{2/3}, n^{1/2} \}$. The results are shown in Table \ref{table_1}.

\begin{figure}[htbp]
    \begin{tabular}{ccc}
      \begin{minipage}[t]{0.3\hsize}
        \centering
        \includegraphics[keepaspectratio, width=40mm]{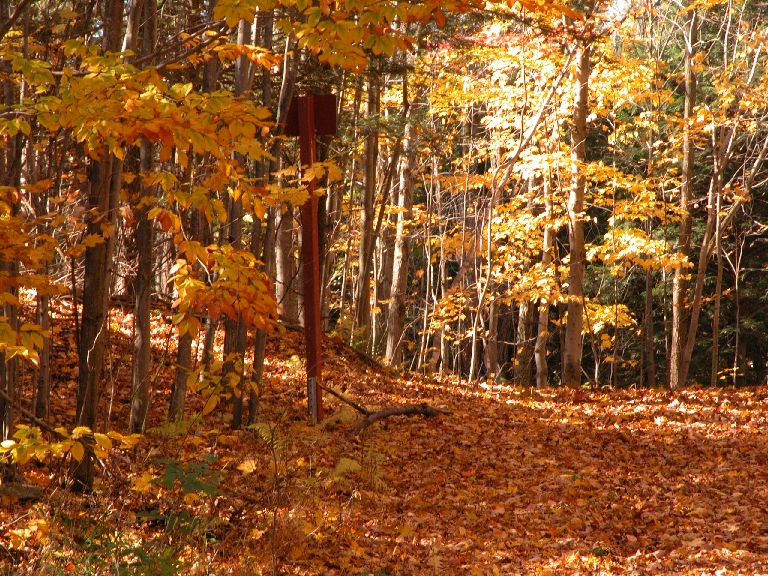}
      \end{minipage} &
      \begin{minipage}[t]{0.3\hsize}
        \centering
        \includegraphics[keepaspectratio, width=40mm]{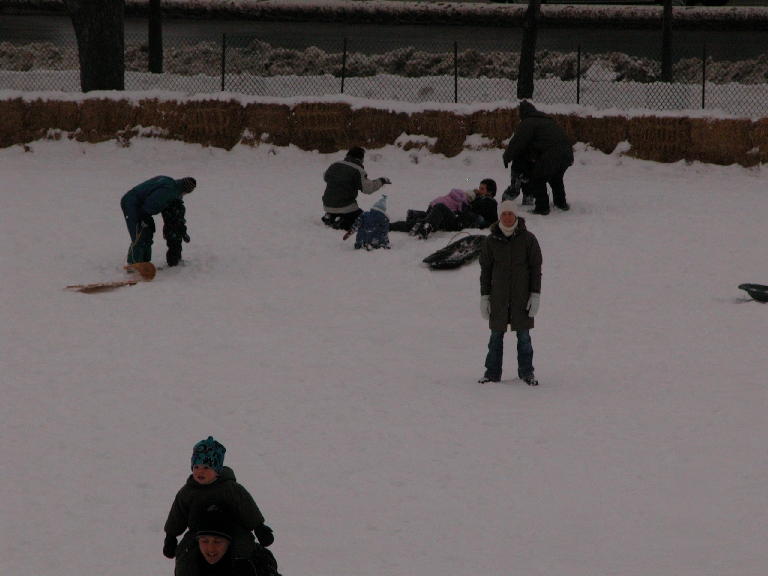}
      \end{minipage} &
      \begin{minipage}[t]{0.3\hsize}
        \centering
        \includegraphics[keepaspectratio, width=40mm]{figs_draft/samplemerry_mtl07_054.jpg}
      \end{minipage} \\
   
      \begin{minipage}[t]{0.3\hsize}
        \centering
        \includegraphics[keepaspectratio, width=40mm]{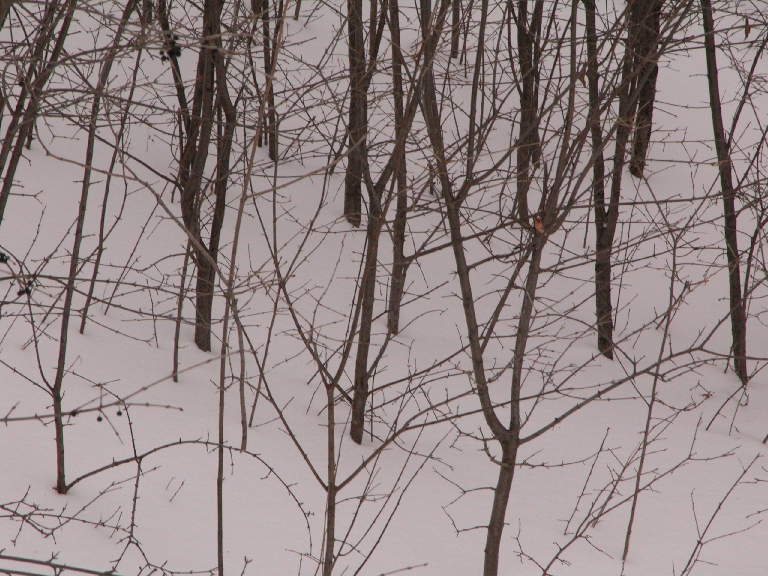}
      \end{minipage} &
      \begin{minipage}[t]{0.3\hsize}
        \centering
        \includegraphics[keepaspectratio, width=40mm]{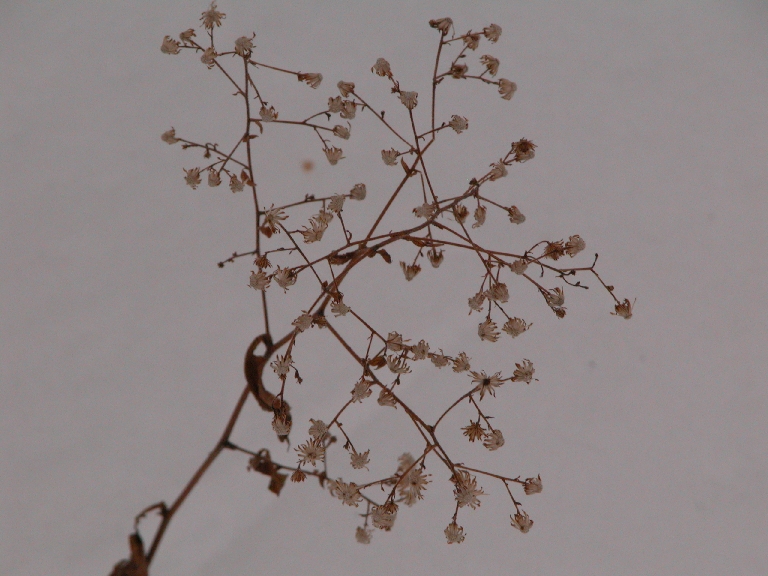}
      \end{minipage} &
      \begin{minipage}[t]{0.3\hsize}
        \centering
        \includegraphics[keepaspectratio, width=40mm]{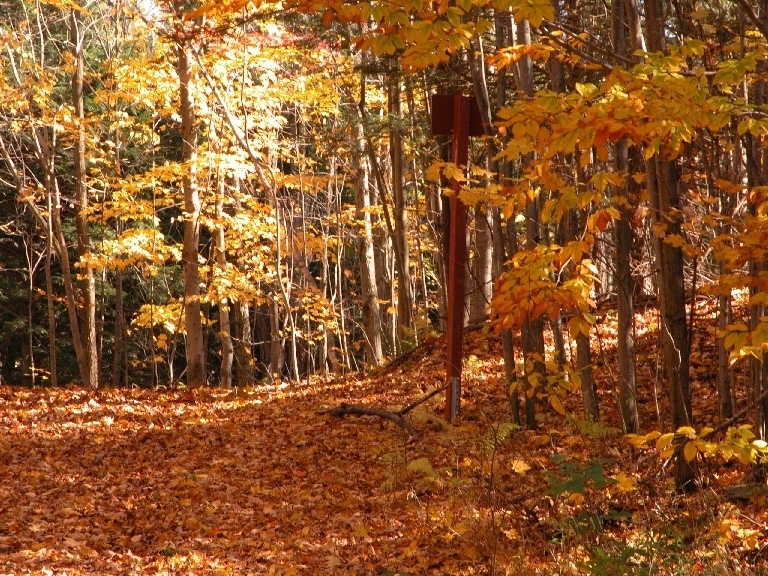}
      \end{minipage}
    \end{tabular}
    \caption{\small{Datasets of images. The first, second and third columns show the dataset 1, 2 and 3, respectively.}}
    \label{datasets}
  \end{figure}

\begin{table}[htbp]
    \centering
    \begin{tabular}{cccccc} 
    \toprule
    Dataset & Statistic & \multicolumn{3}{c}{$p$-value}  \\
     \cmidrule(lr){3-5}
    & &$\ell=n^{4/5}$&$\ell=n^{2/3}$&$\ell=n^{1/2}$ \\
    \midrule
    1 & 15.55 & $<$0.001 & $<$0.001 & $<$ 0.001 \\
    2 & 9.07 & $<$ 0.001 & $<$ 0.001 & $<$ 0.001 \\
    3 & 0.25 &0.446 & 0.372 & 0.352 \\
    \bottomrule
    \end{tabular}
    \caption{\small{Two-sample testing for the color distributions of the images}}
    \label{table_1}
\end{table}

We observe that, for the dataset 1, the proposed test with every replacement $\ell$ suggests a strong rejection of the null hypothesis.
For the dataset 2, we also see a strong rejection of the null hypothesis, but the test statistics (9.07) is smaller than that for the dataset 1 (15.55).
For the dataset 3, the proposed test with any replacement $\ell$ does not report a small $p$-value, which means there is no strong evidence to reject the null hypothesis. 

\subsection{Interval estimation for regularized projection robust Wasserstein distance}
Given a level $\alpha \in (0, 1)$ and i.i.d. samples $X_1, ..., X_n \sim r, Y_1, ..., Y_m \sim s$, we aim to construct an asymptotic confidence interval $C_{nm}$ for the regularized PRW distance $\mathrm{PW}_{p, \lambda}(r, s)$, so that
\[
\liminf_{n, m \to \infty}
P(\mathrm{PW}_{p, \lambda}(r, s) \in C_{mn})
\ge
1 - \alpha.
\]
The previous distributional results allow us to construct $C_{nm}$. Although we focus on the regularized PRW distance, we can also construct such an interval for the IPRW distance under $r \neq s$ in the same manner. 

Let $\hat{r}_\ell^{\ast}$ and $\hat{s}_\ell^{\ast}$ be the empirical bootstrap distributions obtained from bootstrap samples $X_1^\ast, ..., X_\ell^\ast \sim \hat{r}_n$ and $Y_1^\ast, ..., Y_\ell^\ast \sim \hat{s}_m$, respectively. 
We denote the $\alpha/2$ and $(1 - \alpha/2)$ quantiles of $\mathrm{PW}_{p, \lambda}(\hat{r}_\ell^\ast, \hat{s}_\ell^\ast)$ as $q_{\alpha/2}$ and $q_{1-\alpha/2}$, respectively, and define
\[
C_{nm}
=
\left[
\mathrm{PW}_{p, \lambda}(\hat{r}_n, \hat{s}_m) - \sqrt{\frac{n+m}{nm}} q_{1- \alpha/2}, 
\mathrm{PW}_{p, \lambda}(\hat{r}_n, \hat{s}_m)- \sqrt{\frac{n+m}{nm}} q_{\alpha/2}
\right].
\]
Then, the validity of the rescaled bootstrap for the regularized PRW distance (Proposition \ref{bootprw}) implies that, under $\ell \to \infty, \ell/n \to 0, \ell / m \to 0$ as $n, m \to \infty$ and $m/(n+m) \to \delta \in (0, 1)$, $C_{nm}$ is an asymptotic $(1-\alpha)$ confidence interval for $\mathrm{PW}_{p, \lambda}(r, s)$.  

We apply the proposed interval estimation method to handwritten letter images from the Modified National Institute of Standards and Technology database (MNIST) dataset (\url{http://yann.lecun.com/exdb/mnist/}).
The dataset contains images with $576$ pixels for handwritten digits from $0$ to $9$. Because the distributions generating the images of each digit are likely to have low-dimensional structures, the PRW distance is expected to capture the differences between them effectively. Based on the above result, we construct $0.95$ confidence intervals for regularized PRW distances between pairs of digits. 
Specifically, we use $n=m=892$ images of digits 0, 1, 4, 7 and 9, and extract $128$-dimensional features of each image using a convolution neural network (CCN), as outlined in \cite{lin2020projection}.
Then, we estimate the global intrinsic dimension of feature data using the maxLikLocalDimEst function in the \textsf{R}~package \textit{intrinsicDimension} \cite{r_intrinsic} and obtain an estimate of 6.77. 
Based on this estimate, we set the projection dimension to $7$ and the order to $p=2$. 
We then construct the 0.95 confidence intervals using  $B=1000$ bootstrap with replacement $n^{4/5}  \approx 230$. 
The regularized PRW distance are calculated by the Riemannian optimization method proposed by \cite{lin2020projection}.

Figure \ref{intervals_MNIST} shows the results. The distances between digits $1$ and $7$ or digits $4$ and $9$ are smaller than those between digits $0$ and $1$ or digits $0$ and $4$.
Moreover, the distances between the same digits are quite small.
These results are consistent with our intuition. 

Furthermore, we add Gaussian noise with a standard deviation of $\sigma = 1, 5, 10$ to the feature data and again construct 0.95 confidence intervals for the regularized PRW distances. 
For comparison, we also construct 0.95 confidence intervals
for the original Wasserstein distances \citep{sommerfeld2018inference}.
The results are shown in Figure \ref{intervals_MNIST_noise}. 
The interval estimates of the regularized PRW distance are less influenced by the increase of the variance of the Gaussian noise than those of the Wasserstein distance. 
This result implies that, the PRW distance is more robust to the noise than than the original Wasserstein distance, when the dataset has a low-dimensional structure.

\begin{figure}[htbp]
   \centering
    \includegraphics[keepaspectratio, width=110mm]{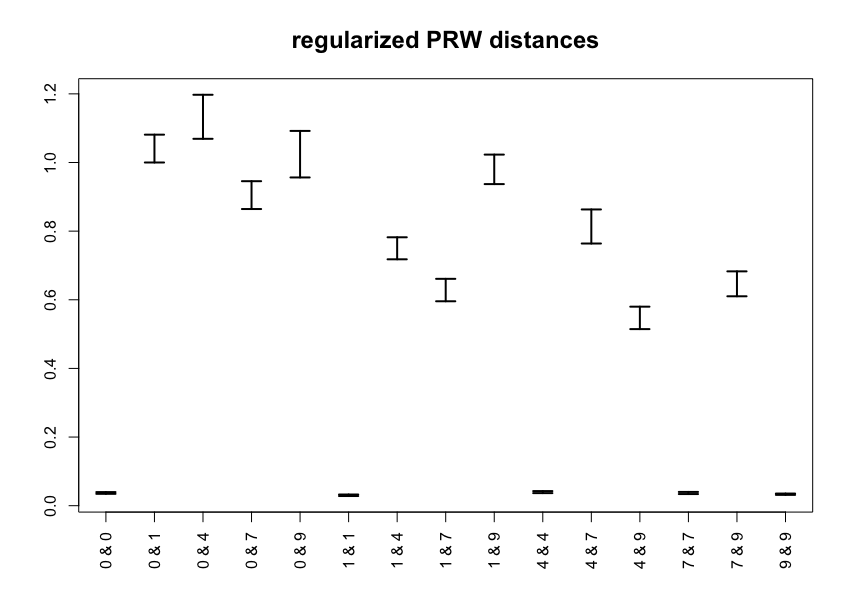}
    \caption{\small{Display of $0.95$ confidence intervals for the regularized PRW distance between hand-written digits. Intervals for the same digits are calculated by splitting the dataset into two groups. Intervals are normalized by setting the lower bound for $0 \& 1$ to be 1.}}
    \label{intervals_MNIST}
\end{figure}

\begin{figure}[htbp]
  \begin{minipage}[b]{0.9\linewidth}
    \centering
    \includegraphics[keepaspectratio, width=125mm]{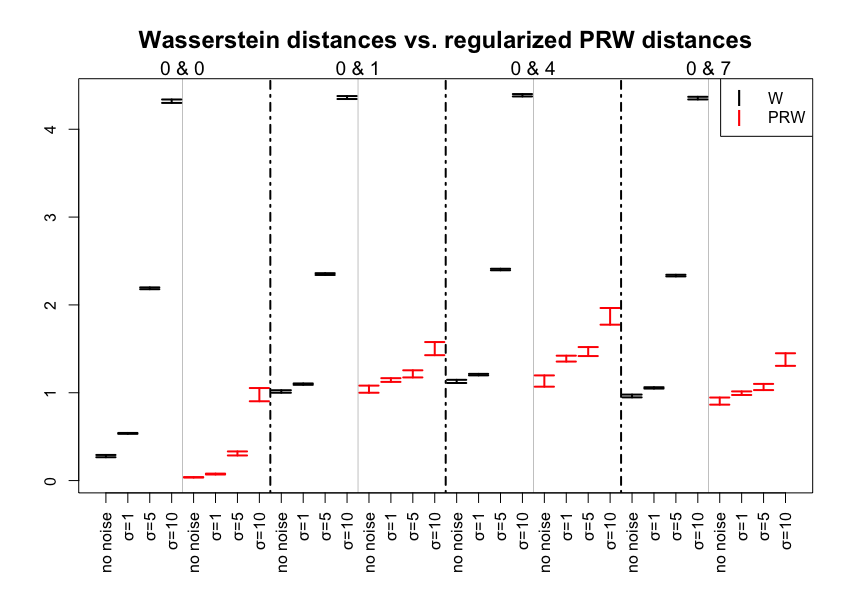}
  \end{minipage} \\
  \begin{minipage}[b]{0.9\linewidth}
    \centering
    \includegraphics[keepaspectratio, width=125mm]{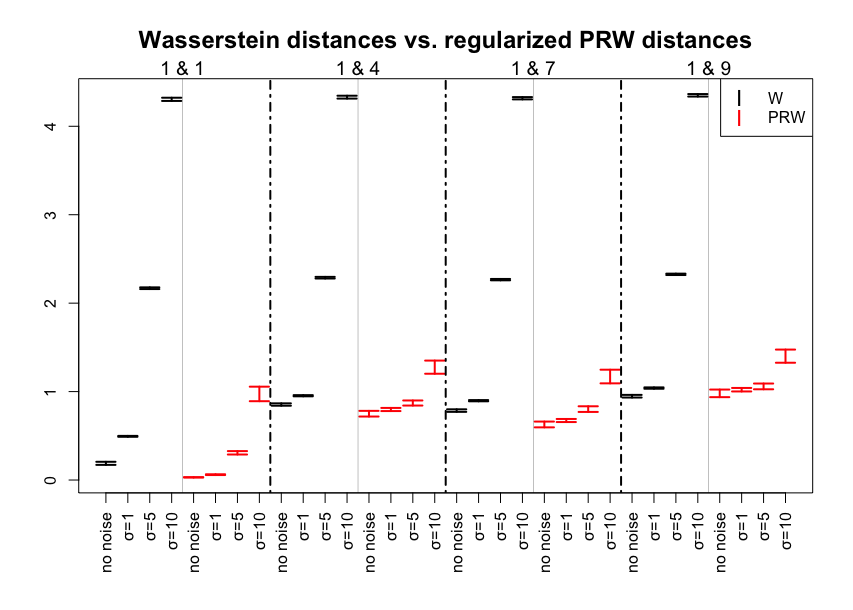}
  \end{minipage}
    \caption{\small{Display of $0.95$ confidence intervals for the regularized PRW and Wasserstein distance between hand-written digits with Gaussian noises. Intervals for the same digits are calculated by splitting the dataset into two groups.
    For each distance, 
    intervals are normalized by setting the lower bound for $0 \& 1$ to be 1.}}
    \label{intervals_MNIST_noise}
\end{figure}

\section{Discussion and conclusions}
This study investigated statistical inference for the IPRW and regularized PRW distances. Although these projection-based Wasserstein distances are practical for many machine learning tasks, their inferential tools have not been well established. We derived the limit distributions of the empirical versions of these distances on finite spaces by showing their directional Hadamard differentiability.
We also show that, while the naive bootstrap fails for these distances, the rescaled bootstrap is consistent. 

There are promising directions for future research. Our theoretical results are limited to finitely supported measures and it is worthwhile to extend them to more general settings. The appropriate choice of the replacement number of the rescaled bootstrap or projection dimension of the PRW distance is important in practice. Developing data-driven methods to choose their values is an interesting direction for further research.

\bibliographystyle{plain}
\bibliography{reference}

\end{document}